\theoremstyle{plain}
\newtheorem{theorem}{Theorem}[section]
\newtheorem{lemma}[theorem]{Lemma}
\newcommand{\denselist}{\itemsep 0pt\parsep=1pt\partopsep 0pt}
\newcommand{\etal}              {et al. \xspace}
\newcommand{\eps}{{\varepsilon}}
\newcommand{\Rips}	{\mathcal{R}}
\newcommand{\SRips}	{\mathcal{S}}
\newcommand{\QRips}	{\mathcal{Q}}
\newcommand{\SetRips}{\mathcal{B}}
\newcommand{\hh}		{{\hat{h}}}
\newcommand{\vmap}	{{\pi}}
\newcommand{\hv}		{{\hat{\vmap}}}
\newcommand{\hmap}	{{h}}
\newcommand{\oldhmap}		{{s}}
\newcommand{\dpert}{\hat{d}}
\DeclareMathOperator*{\argmax}{argmax}
\DeclareMathOperator*{\argmin}{argmin}
\begin{document}
	
\title{\textbf{SimBa}: An Efficient Tool for Approximating Rips-filtration Persistence via \underline{\textbf{Sim}}plicial \underline{\textbf{Ba}}tch-collapse}
\author{Tamal K. Dey\thanks{Department of Computer Science and Engineering, The Ohio State University, Columbus, OH, USA. Emails: \texttt{tamaldey, shiday, yusu@cse.ohio-state.edu}},~~~ Dayu Shi,~~ Yusu Wang$^*$}

\date{}

\maketitle

\begin{abstract}
In topological data analysis, a point cloud data $P$ extracted from a metric space is often analyzed by computing the persistence
diagram or barcodes of a sequence of Rips complexes built on $P$ indexed by a scale parameter.  
Unfortunately, even for input of moderate size, the size of the
Rips complex may become prohibitively large as the scale parameter increases. 
Starting with the \emph{Sparse Rips filtration} introduced by Sheehy, some existing methods aim to reduce the size of the complex so as to improve the time efficiency as well. 
However, as we demonstrate, existing approaches still fall short of
scaling well, especially for high dimensional data. 
In this paper, we investigate the advantages and limitations of existing approaches. Based on insights gained from the experiments, we propose an efficient new algorithm, called \emph{SimBa}, for approximating the persistent homology of Rips filtrations with quality guarantees. Our new algorithm leverages a batch collapse strategy as well as a new sparse Rips-like filtration. We experiment on a variety of low and high dimensional data sets. We show that our strategy presents a significant size reduction, and our algorithm for approximating Rips filtration persistence is order of magnitude faster than existing methods in practice. 
 \end{abstract}

\section{Introduction}

In recent years, topological ideas and methods have emerged as a new paradigm for analyzing complex data \cite{Carl09,EH10}. An important line of work in this direction is the theory and applications of persistent homology. It provides a powerful and flexible framework to inspect data for characterizing and summarizing important features that persist across different scales. 
Since its introduction \cite{ELZ02,F90,Robins99}, there have been many fundamental developments \cite{BD11,zigzag,CZ09,CSGGO2009,CSGO12,CEH07,CEH09,CEM06,ZC05} both to generalize the framework and to provide theoretical understanding for various aspects of it (such as its stability). 
These developments help to provide foundation and justification of the practical usage of persistent homology; see e.g, \cite{dgh-rips,CGOS11,DLLRSW10,dario,RHBK15}. 

A determining factor in applying persistent homology to a broad range of data analysis problems is the availability of efficient and scalable software.
Given the rapidly increasing size of modern data, the "efficiency" necessarily concerns both time and space complexities. 
The original algorithm to compute persistent homology takes $O(n^3)$ time and $O(n^2)$ space for a filtration involving $n$ number of total simplices \cite{ELZ02}. 
Various practical improvements have been suggested \cite{CK13,DMV11}. 
An early software widely used for computing persistence is
Morozov's Dionysus ~\cite{Dionysus}.
Later, Bauer \etal{} developed the PHAT toolbox \cite{BKR14b},  
based on several efficient matrix reduction strategies (mostly focusing on time efficiency) as described in \cite{BKR14a}. 
A more recently developed library called GUDHI \cite{GUDHI2015} considers the improvement both in time and space efficiencies. In particular, it uses an efficient data structure, called the simplex tree \cite{BM2012}, to encode input simplicial complexes, and uses the \emph{compressed annotation matrix} \cite{BDM2013} to implement the persistent cohomology algorithm. 
Dionysus, PHAT, and GUDHI offer efficient software for computing
persistence induced by inclusions. For our algorithm, we need
persistence induced by more general simplicial maps for which
we use Simpers~\cite{Simpers2015} developed on the basis of the 
algorithm in~\cite{DFW2014}.

The above results and software cater to general persistence computations. In practice, often the persistence needs to be computed for
a particular filtration called the \emph{Vietoris-Rips} or 
\emph{Rips filtration} in short. Given a set of points $P$ embedded in $\mathbb{R}^d$ (or in more general metric spaces), the Rips complex $\Rips^{\alpha}(P)$ with radius or scale $\alpha$ is the clique complex induced by the set of edges $\{ (p, p') \mid d(p, p') \le \alpha, p, p' \in P\}$. 
One is interested in the persistent homology induced by the sequence of Rips complexes $\Rips^{\alpha_1} \subseteq \Rips^{\alpha_2} \subseteq \cdots \subseteq \Rips^{\alpha_m}$ for a growing sequence of radii $\alpha_1 \le \alpha_2 \le \ldots \le \alpha_m$. 
Intuitively, the Rips complex at a specific scale $\alpha$ approximates the union of radius-$\alpha$ balls around sample points in $P$.
Thus, it captures the structure formed by input points $P$ at different scales. 

Unfortunately, even for a modest size of $n$ (in the range of thousands), the size of Rips complex (as well as the slightly more economical \v Cech complex) becomes prohibitively large as the radius $\alpha$ increases. 
In \cite{S2012}, Sheehy proposed an elegant solution for this problem by introducing a \emph{sparse Rips filtration} to approximate the persistent homology of the Rips filtration for a set of points $P$. 
An alternative approach of collapsing input points in batches with
increasing radius $\alpha$ was proposed in \cite{DFW2014}, 
which leveraged the persistence algorithm proposed
in the same paper for filtrations arising out of simplicial maps. 

\paragraph*{New work.} 
Given the importance of the Rips filtration in practice, our goal is to investigate the practical performance of the existing proposed methods, understand their advantages and limitations, and develop an efficient implementation for 
approximating the persistent homology of Rips filtrations. 
To this end, we make the following contributions. 
\begin{itemize}\denselist
	\item[1.] We investigate the advantages and limitations of 
	three existing methods, two based on Sparse Rips~\cite{S2012,CJS2015},
	and another on Batch-collapse~\cite{DFW2014}. Specifically, experiments show that while the sparse Rips algorithm by Sheehy \cite{S2012} has 
a theoretical guarantee on the size of the filtration and gives good approximation of the persistence diagrams for the Rips filtration in practice, it generates simplicial complexes of large size even for input of moderate size. This problem becomes more severe as the dimension of the input data increases. The algorithm fails to finish for several high dimensional data sets of rather moderate size. See Table \ref{fig:table} for some examples. The batch-collapse approach is much more space efficient (which leads to time efficiency as well). Nevertheless, we find that its size still becomes prohibitive for high dimensional data. 
	\item[2.] Based on the insights gained from experimenting with the existing approaches, we propose a new algorithm called \emph{SimBa} that approximates a Rips filtration persistence via simplicial batch-collapses. Our algorithm is a modification of the previous batch-collapse of Rips filtration proposed in \cite{DFW2014}. While theoretically, the modification may not seem major, empirically, it reduces the size of the filtration significantly and thus leads to a much more efficient approximation of the Rips filtration persistence. Furthermore, we show that this modification maintains a similar approximation guarantee as the batch-collapse of Rips filtration proposed in \cite{DFW2014}. 
	We describe the details of an efficient and practical implementation of SimBa, the software for which has been made publicly available from \cite{Simpers2015}. 
\end{itemize}

Two concepts, homology groups of a simplicial complex, and simplicial
maps between two complexes are used throughout this paper.
We refer the reader to any standard text such as~\cite{Munkres}
for details. We denote the $p$-dimensional homology group of a
simplicial complex ${\mathcal K}$ under $\mathbb{Z}_2$ coefficients by
$H_p({\mathcal K})$.

\section{Rips filtration and its approximation}
\label{sec:existing}

Given a set of points $P \subset \mathbb{R}^d$, let $\langle p_0, \ldots, p_s \rangle$ denote the $s$-dimensional simplex spanned by vertices $p_0, \ldots, p_s \in P$. 
The Rips complex at scalar $\alpha$ is defined as $\Rips^\alpha(P) = \{ \langle p_{0},\ldots, p_{s} \rangle \mid \|p_{i} - p_{j}\| \le \alpha, $ for any $i, j \in [0, s]\}$.   
Now consider the following {\em Rips filtration}: 
\begin{align}\label{eqn:Ripsfiltration}
\{\Rips^\alpha(P)\}_{\alpha > 0} ~&:=~ \Rips^{\alpha_1}(P) \hookrightarrow \Rips^{\alpha_2}(P) \cdots \hookrightarrow \Rips^{\alpha_n}(P) \cdots 
\end{align}
The inclusion maps between consecutive complexes above
induce homomorphisms between respective homology groups, 
giving rise to a so called {\em persistence module} for dimension $p$: 
\begin{align}
	H_p(\Rips^{\alpha_1}(P))\rightarrow H_p(\Rips^{\alpha_2}(P))\rightarrow\ldots
	\rightarrow H_p(\Rips^{\alpha_n}(P)) \cdots
\end{align}
If a homology class is created at $\Rips^{\alpha_i}(P)$ (i.e, does not have pre-image under homomorphism $H_p(\Rips^{\alpha_{i-1}}) \rightarrow H_p (\Rips^{\alpha_i})$) and dies entering $\Rips^{\alpha_j}(P)$ (i.e, its image vanishes under homomorphism $H_p(\Rips^{\alpha_{j-1}}(P)) \to H_p(\Rips^{\alpha_j}(P))$, then $\alpha_i$ is its \emph{birth time}, $\alpha_j$ is its \emph{death time}, and the difference $\alpha_j - \alpha_i$ is called the {\em persistence} of the class. In each dimension, the persistence barcodes capture the persistence of such homology classes by using a horizontal bar with left and right end points
at $\alpha_i$ and $\alpha_j$ respectively. 
These \emph{persistence barcodes} of the above Rips filtration are often the target summary of $P$ and/or of the space $P$ samples, which one wishes to compute in topological data analysis. 

The main bottleneck for computing the barcodes
of a Rips filtration stems from its size blowup. As the parameter
$\alpha$ grows, the Rips complex $\Rips^{\alpha}(P)$ can become
huge very quickly. To address this blowup in size, Sheehy~\cite{S2012} 
suggested a novel approach of sparsifying the point set $P$ as one proceeds
with increasing $\alpha$ in a way that does not alter the
barcodes too much. The idea is to replace the original Rips filtration
$\{\Rips^{\alpha}(P)\}_{\alpha>0}$ on $P$ with a sequence of smaller complexes 
$\{\SRips^{\alpha}\}_{\alpha>0}$ and show that the two sequences
{\em interleave} at the homology level. Then, appealing to the 
results of interleaving persistence modules~\cite{CSGGO2009}, one can show that
the barcodes of $\{\SRips^{\alpha}\}_{\alpha>0}$ approximate
those of $\{\Rips^{\alpha}(P)\}_{\alpha>0}$ reasonably.
The complexes $\SRips^{\alpha}$ are constructed as the {\em union} of Rips-like
complexes built on a sequence of subsets of $P$ 
rather than the entire set $P$. 

The union allows the complexes in 
$\{\SRips^{\alpha}\}_{\alpha>0}$ to be connected with inclusions
and hence permits using
efficient algorithms and software designed for inclusion induced persistence.
However, the size of 
$\SRips^{\alpha}$ may still be large due to the union operation. 
An alternative is to avoid the union operation but allow deletion or collapse of vertices (and simplices) at larger scale $\alpha$ \cite{CJS2015,S2012}
resulting into a sequence of Rips-like complexes connected with simplicial maps instead of inclusions. 
This approach, which we refer to as {\em Sparse Rips with collapse}, however achieves only moderate improvements in size reduction. 
We find that much more aggressive size reduction can be achieved 
by considering the
collapse in a batched fashion that gives rise to the approach of {\em Batch-collapsed Rips} \cite{DFW2014}. 

Finally, building on the batch-collapse idea, we propose a new approach,
called {\em SimBa} that significantly reduces the size of Rips-like complexes and their computations. 
This is achieved primarily by 
replacing inter-point distances with {\em set distances} while
computing the complexes. 
We prove that this approach
still provably approximates the barcodes of the original Rips filtration in sequence (\ref{eqn:Ripsfiltration}). 

In what follows, we provide more details about each existing method along with 
its performance in practice, which explains the motivation behind the  
new tool SimBa. 

\subsection{Sparse Rips filtration (inclusions)}

Let $P$ be a set of points in a metric space $({\mathcal M}, d)$. A {\em greedy permutation} $\{ p_1,..,p_n \} $ of $P$ is defined recursively as follows: 
Let $p_1 \in P$ be any
point and define $p_i$ recursively as 
$p_i = \argmax_{p \in P \setminus P_{i-1}} d(p,P_{i-1}),$
where $P_{i-1} = \{ p_1,...,p_{i-1}\}$. 
This gives rise to a nested sequence of subsets $P_1 \subset P_2 \subset \cdots P_n=P$. 
Furthermore, each subset $P_i$ is locally dense and uniform (net) in the
following sense.
Define the insertion radius $\lambda_{p_i}$ of a point $p_i$ as $\lambda_{p_i} = d(p_i,P_{i-1})$. Each $P_i$ is a $\lambda_{p_i}$-net of $P$, meaning that $d(p,P_i) \le \lambda_{p_i}$ for every $p \in P$ and $d(p,q) \ge \lambda_{p_i}$ for every distinct pairs $p,q \in P_i$. 
These nets can be extended to a single-parameter family of nets as $\{ N_\gamma \}$ where $N_{\gamma} = \{p \in P | \lambda_p > \gamma\}$ is a $\gamma$-net of $P$. 

Using the idea of Sheehy~\cite{S2012},
Buchet et al.~\cite{BCOS15} define a Rips-like filtration using the above specific nets
and assigning weights to points whose geometric interpretation is explained
in~\cite{CJS2015}.
Each point $p\in P$ is associated
with a weight $w_p(\alpha)$ at scale $\alpha$ as 
\begin{equation*}
	w_p(\alpha) = \begin{cases}
		0 & \text{if}\  \alpha \le \frac{\lambda_p}{\eps} \\
		\alpha-\frac{\lambda_p}{\eps} & \text{if}\  \frac{\lambda_p}{\eps} < \alpha \le \frac{\lambda_p}{\eps(1-\eps)} \\
		\eps \alpha & \text{if}\  \frac{\lambda_p}{\eps(1-\eps)} \le \alpha
	\end{cases}
\end{equation*}
where $0 < \eps < 1$ is an input constant that controls the sparsity of the filtration. Then, the perturbed distance between pairs of points is defined as
\begin{equation*}
	\dpert_{\alpha}(p,q) = d(p,q) + w_p(\alpha) + w_q(\alpha).
\end{equation*}
Using the perturbed distance $\dpert_\alpha$, the Sparse Rips complex at scale $\alpha$ is defined as 
\begin{equation*}
	\QRips^\alpha = \{\sigma \subset {N}_{\eps(1-\eps)\alpha}\ |\ \forall p,q \in \sigma,\ \dpert_\alpha(p,q) \le 2\alpha \}.
\end{equation*}
The sequence of $\{\QRips^\alpha\}_{\alpha > 0}$ does not form a nested sequence of spaces because
the vertex set of each $\QRips^\alpha$ comes from the net ${N}_{\eps(1-\eps)\alpha}$ and may decrease as $\alpha$ increases. 
However, one can take $\SRips^\alpha = \bigcup_{\alpha' \le \alpha} \QRips^{\alpha'}$ and build a natural filtration $\{ \SRips^\alpha \hookrightarrow \SRips^{\alpha'} \}_{\alpha < \alpha'}$ connected by inclusions. 
It is shown that the persistence barcodes of the filtration $\{\SRips^\alpha \}_{\alpha > 0}$ approximate those of the Rips filtration $\{\Rips^\alpha \}_{\alpha > 0}$ \cite{S2012}. 

We use the code from \cite{BCOS15} to compute this sparse Rips filtration $\{\SRips^\alpha \}_{\alpha > 0}$.
We then use GUDHI \cite{GUDHI2015} to compute its persistent barcodes as GUDHI has the state-of-the-art performance for handling large complexes due to a compression technique~\cite{BDM2013} for inclusion-based filtrations. 

As a common test case to illustrate the performance of various existing methods, we use a 3-dimensional point set sampled from a surface model called MotherChild; see Figure \ref{fig:FE}. We choose this model because the ground truth is available and also because existing methods have trouble (to different degrees) handling high-dimensional data. 
The size of the point set is $23075$. 
For indicating memory consumption, we refer to {\em cumulative
size} which is the total number of simplices arising in the filtration,
and also to {\em maximum size} which is the maximum over all complexes
arising in the filtration. For Sparse Rips filtrations two sizes 
coincide at the last complex due to inclusions.
Figure~\ref{fig:FE_eps_size} shows the cumulative size
with different 
Sparse Rips parameter $\eps$. It is minimum when $\eps$ is between 
$0.8$ and $0.9$. So, we choose $\eps = 0.8$ to achieve the best performance 
while observing that the approximation quality does not suffer much 
as predicted by the theory. 

The original persistence barcodes are shown in Figure \ref{fig:FE_a_pers_ori}. 
Since it becomes hard to see the main (long) bars in presence
of all spurious ones creating excessive overlaps, we remove all short bars whose ratio between death and birth time is smaller than a threshold for $1$-dimensional
homology group $H_1$. The bars for $H_0$ and $H_2$ are not denoised.
Unless specified otherwise, all barcodes are denoised in the 
same way. The denoised barcodes are shown in Figure \ref{fig:FE_a_pers_den}, 
one for $H_0$, four short and four long for $H_1$ (MotherChild has genus 4),
and one for $H_2$. The cumulative size of the Sparse Rips complexes 
in the filtration is 43.5 million and the total time cost is about 350 seconds.

\begin{figure}[h]
	\centering
	\begin{subfigure}[b]{0.32\textwidth}
		\centering\includegraphics[height=0.8\textwidth]{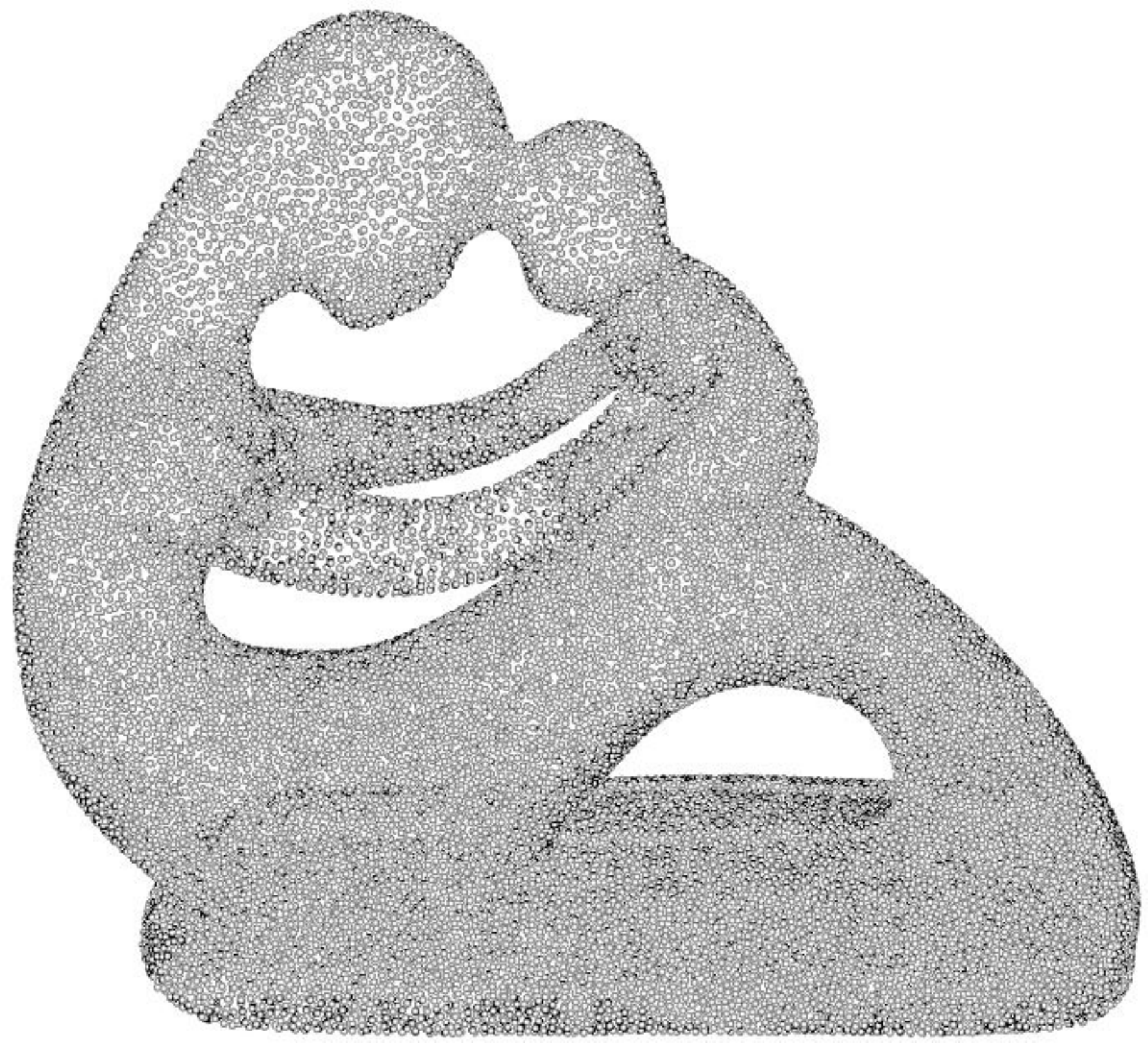}
		\caption{MotherChild model}\label{fig:FE}
	\end{subfigure}
	\begin{subfigure}[b]{0.32\textwidth}
		\centering\includegraphics[height=0.8\textwidth]{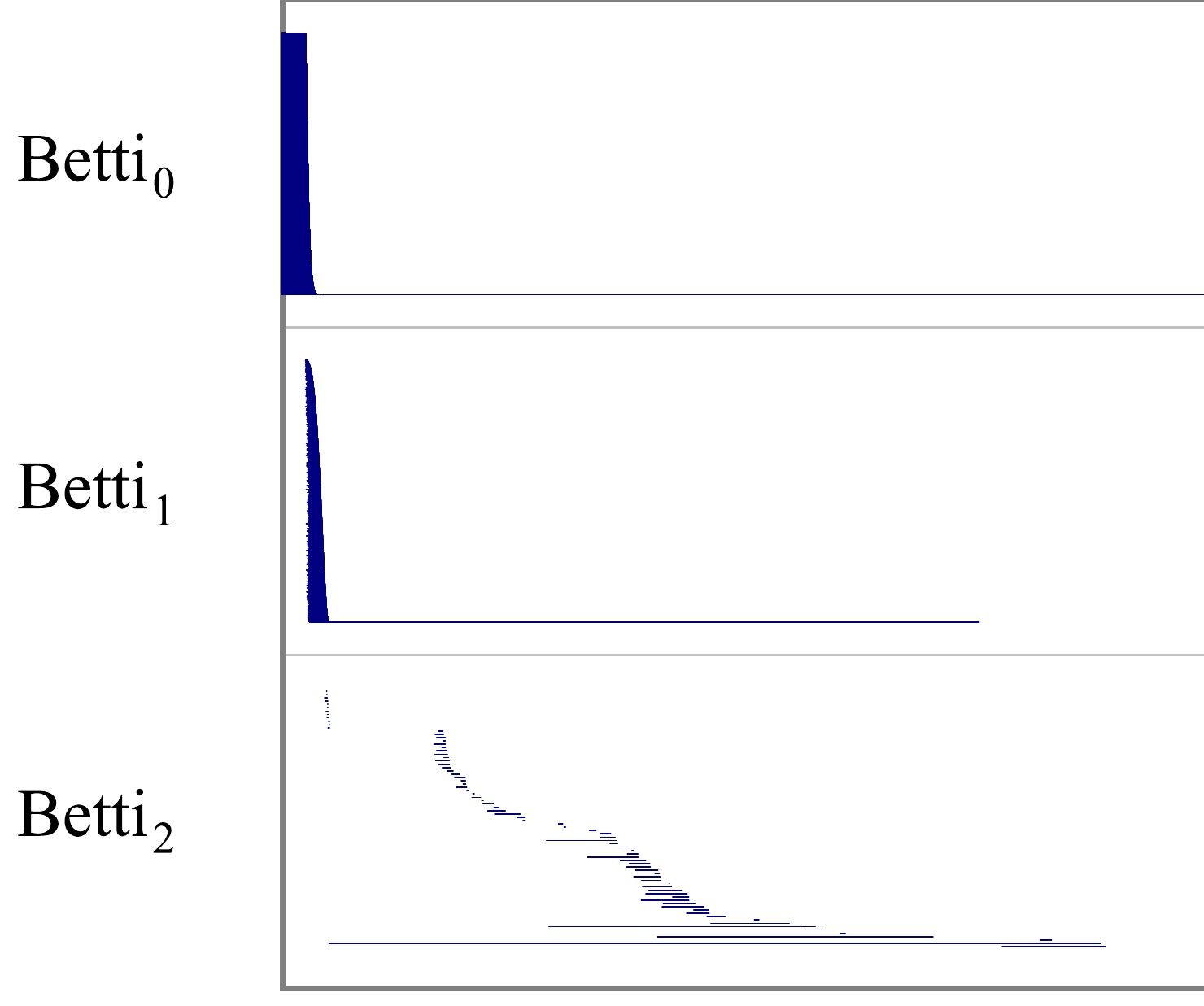}
		\caption{S.R. + GUDHI (original)}\label{fig:FE_a_pers_ori}
	\end{subfigure}
	\begin{subfigure}[b]{0.32\textwidth}
		\centering\includegraphics[height=0.8\textwidth]{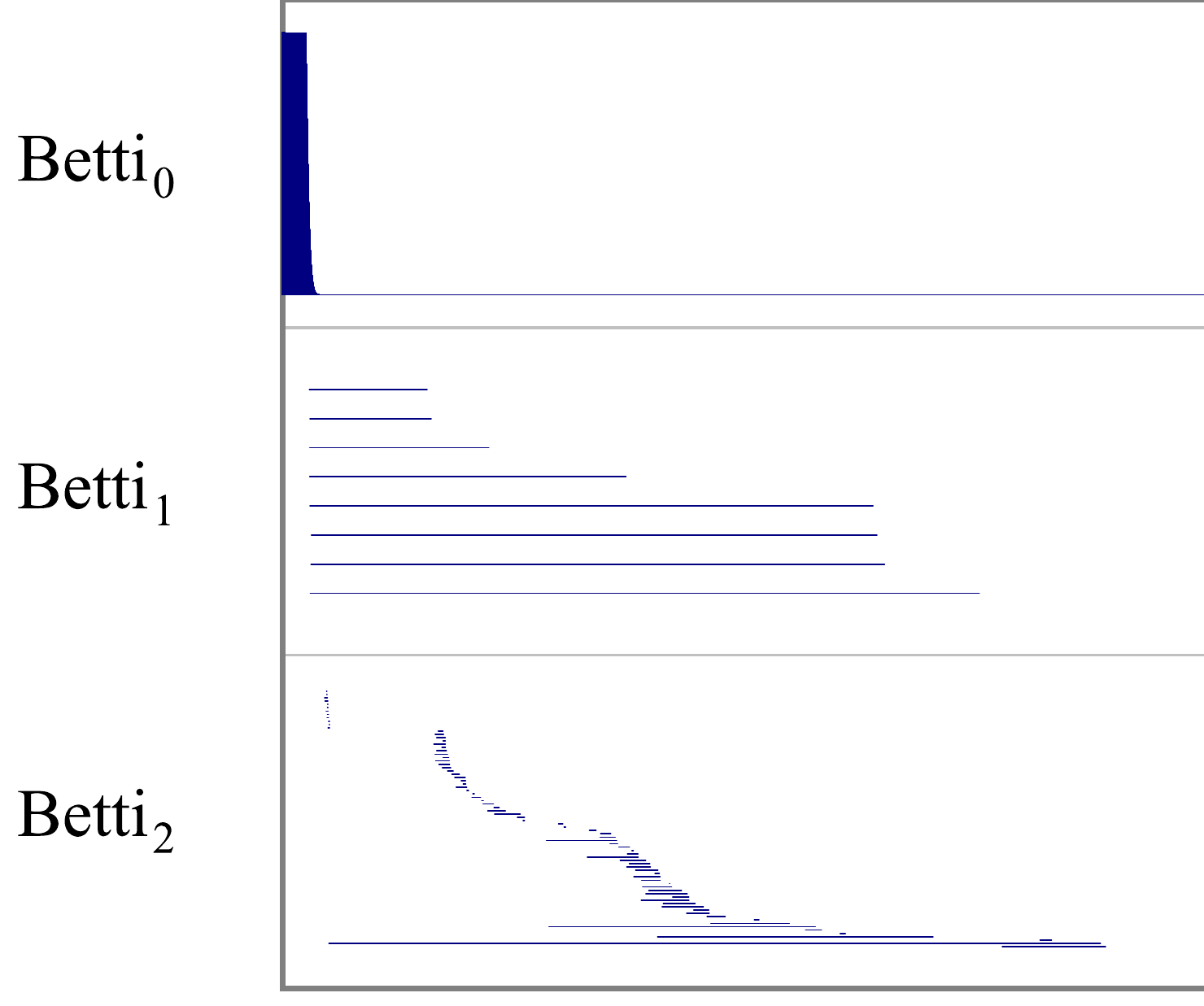}
		\caption{S.R. + GUDHI (denoised)}\label{fig:FE_a_pers_den}
	\end{subfigure}
	
	\begin{subfigure}{0.32\textwidth}
		\centering\includegraphics[height=0.8\textwidth]{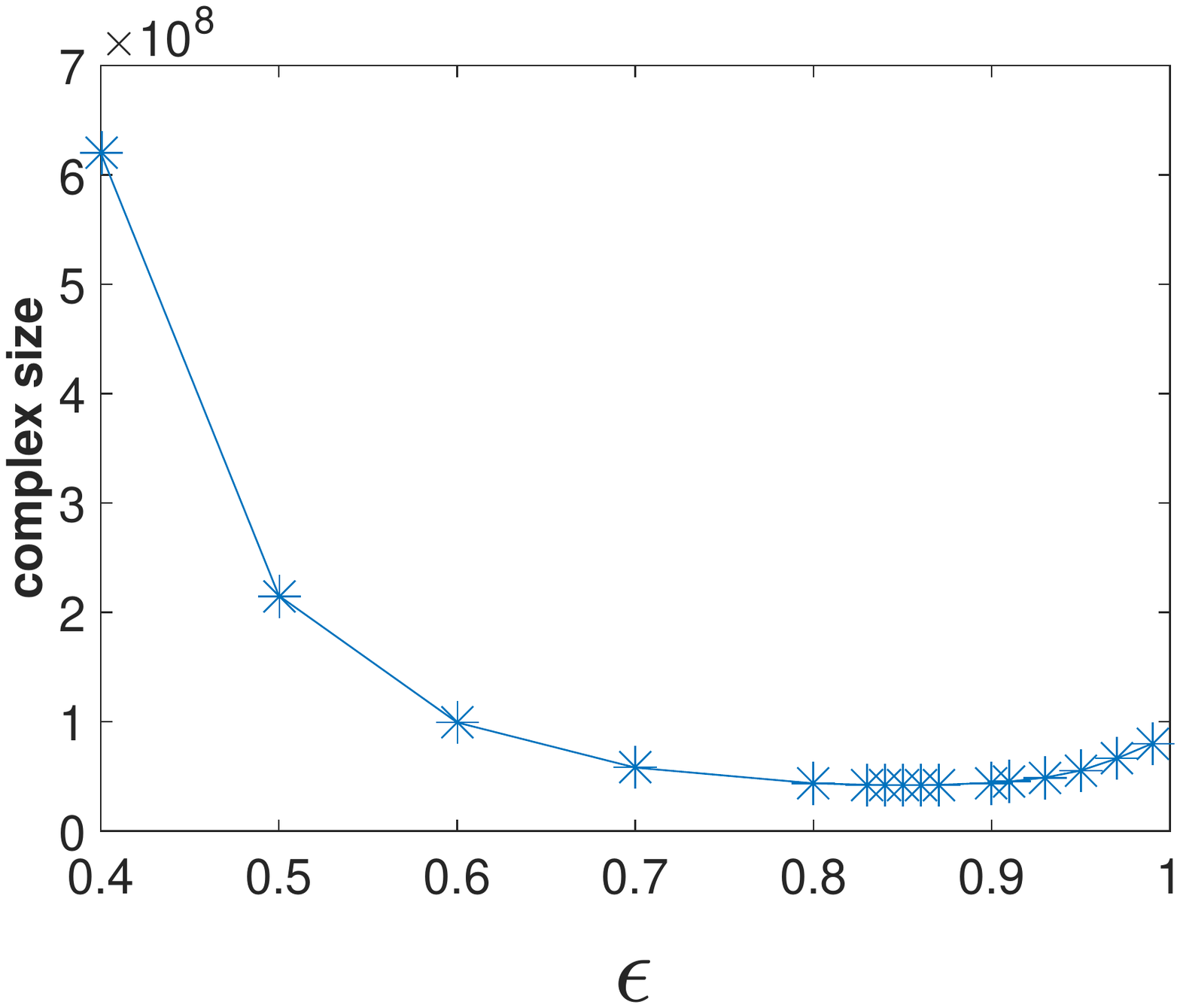}
		\caption{cumulative size}\label{fig:FE_eps_size}
	\end{subfigure}
	\begin{subfigure}{0.32\textwidth}
		\centering\includegraphics[height=0.8\textwidth]{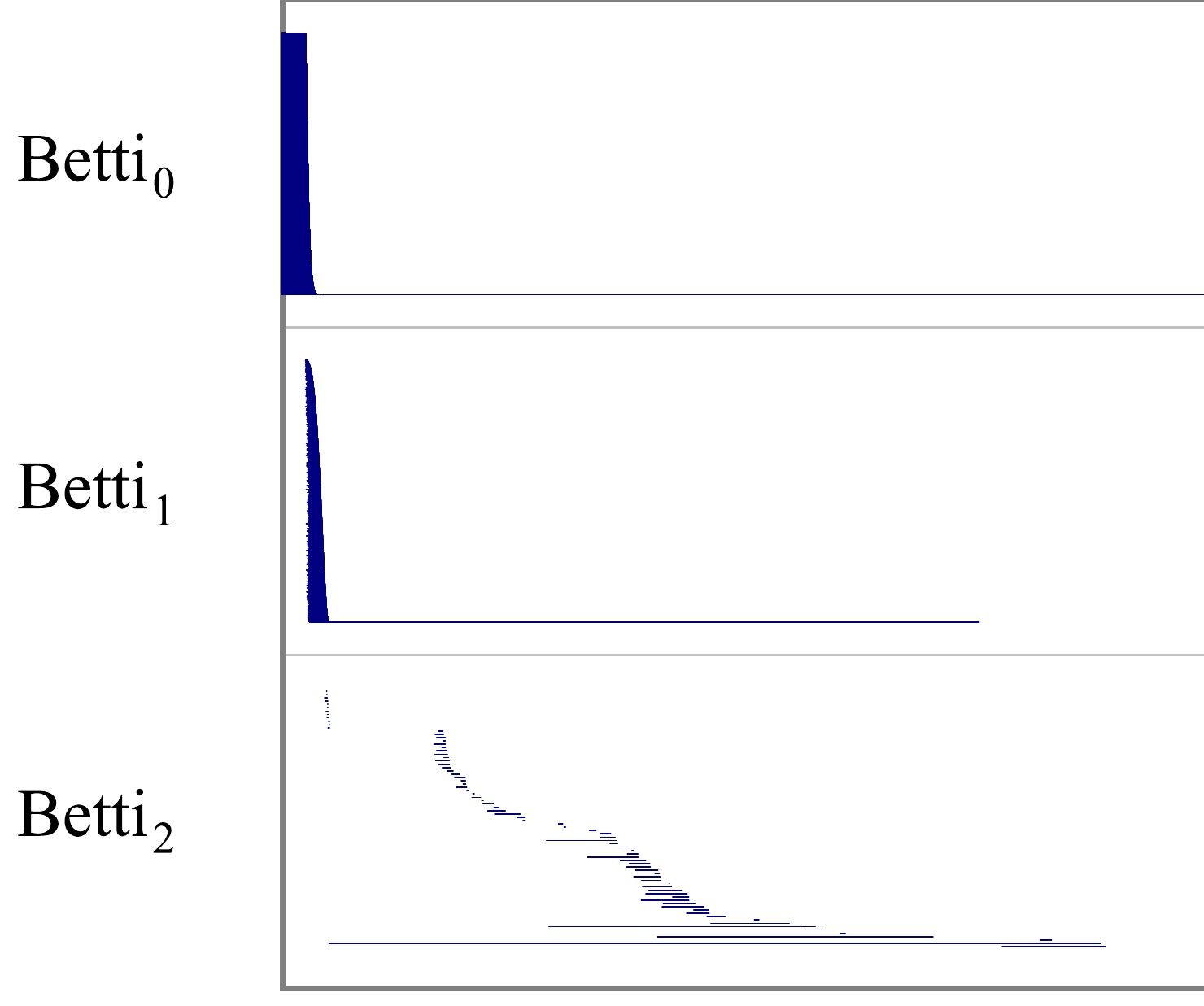}
		\caption{S.R. + Simpers (original)}\label{fig:FE_b_pers_ori}
	\end{subfigure}
	\begin{subfigure}{0.32\textwidth}
		\centering\includegraphics[height=0.8\textwidth]{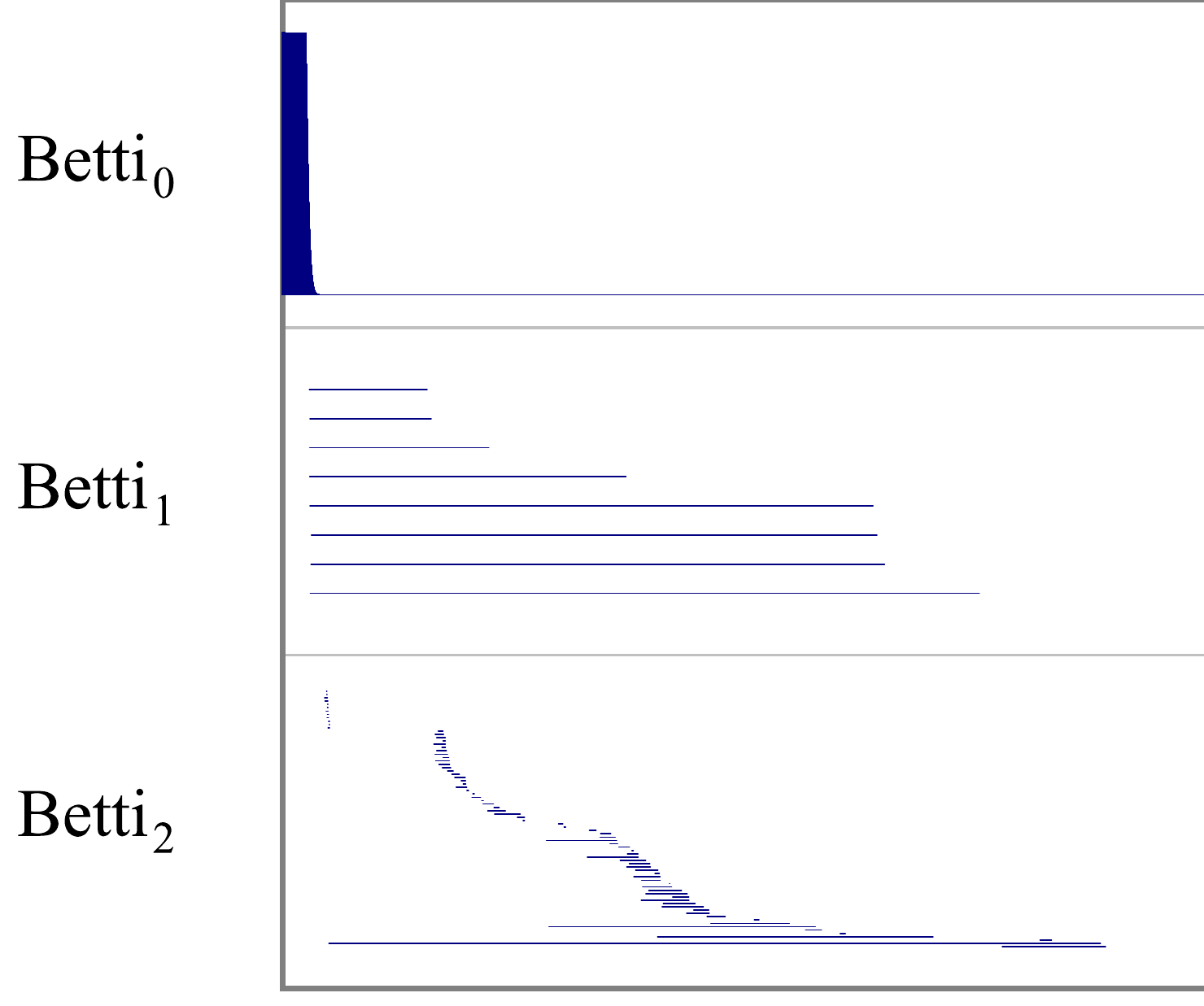}
		\caption{S.R. + Simpers (denoised)}\label{fig:FE_b_pers_den}
	\end{subfigure}
	\caption{MotherChild surface model and its persistence barcodes computed by Sparse Rips (S.R.) based approaches. Since the surface has genus $4$, its barcodes contain long bars: $1$ for $H_0$, $8$ for $H_1$, and $1$ for $H_2$. The minimum cumulative size for complexes, which is about 43 million,
		is achieved around $\eps = 0.8$} 
	\label{fig:FE_ab}
\end{figure}

\subsection{Sparse Rips with collapse}
The persistence barcodes for the inclusion-based filtration $\{ \SRips^\alpha\}_{\alpha > 0}$ are the same as the barcodes of the filtration $\{ \QRips^{\alpha}\}_{\alpha > 0}$ connected by simplicial
maps $\QRips^\alpha \to \QRips^{\alpha'}$ for $\alpha < \alpha'$. 
Specifically, these simplicial maps originate from vertex
collapses defined by the following projection map: 
\begin{equation*}
	\mu_\alpha(p) = \begin{cases}
		p & \text{if}\ p \in N_{\eps(1-\eps)\alpha} \\
		\underset{q \in N_{\eps\alpha}}{\argmin}\ d(p,q) & \text{otherwise}
	\end{cases}
\end{equation*}  
For any scale $\alpha$, the projection $\mu_\alpha$ maps the points of $P$ to the net $N_{\eps(1-\eps)\alpha} \supseteq N_{\eps\alpha}$. One can view it as
$p$ being deleted at time (scale) $\alpha_p =\frac{\lambda_p}{\eps(1-\eps)}$. 
We can construct the sequence of Sparse Rips complexes
$\{\QRips^\alpha\}_{\alpha>0}$ connected with simplicial maps 
induced by insertions and vertex collapses as $\alpha$ increases: 
Specifically, we delete the vertex $p$ and all its incident simplices by collapsing it to its projection $\mu_{\alpha_p}(p)$ where $\alpha_p = \frac{\lambda_p}{\eps(1-\eps)}$, when entering complex $\QRips^{\alpha_p}$. 
See ~\cite{CJS2015} for more details. 

In this approach we need to compute the persistence induced with simplicial
maps. For this, we use the only available software Simpers \cite{Simpers2015} based on the
algorithm presented in~\cite{DFW2014}. 
Our experimental results on the MotherChild model of
Figure~\ref{fig:FE} with $\eps = 0.8$ are given in Figure~\ref{fig:FE_b_pers_ori} and \ref{fig:FE_b_pers_den}. The barcodes are exactly the same as those in Figure \ref{fig:FE_a_pers_ori} and \ref{fig:FE_a_pers_den}. The cumulative size 
of the entire sequence is the same, $43.5$ million, because the final
complex in $\SRips^\alpha$ is the union of all complexes in $\{\QRips^\alpha\}_{\alpha > 0}$. 
However, the maximum size in the sequence is $24.9$ million due to 
vertex collapses in contrast to the maximum size for $\{ \SRips^\alpha\}_{\alpha > 0}$ which equals the cumulative size. 
The time cost for this approach 
is $463.7$ seconds which is larger than that for Sparse Rips with GUDHI. 
So compared to Sparse Rips with GUDHI, this approach has smaller 
maximum size due to collapse but costs more time for 
computing persistence since Simpers computes persistence over collapses 
which are slower operations than inclusions.

While the size of these Sparse Rips complexes is linear in the
number of input points, the hidden constant factor depends exponentially on the doubling dimension of the metric space where points are sampled from. Empirically, we note that the size is still large, and becomes much worse as the dimension of data increases. 
For example, for the Gesture Phase data in Table~\ref{fig:table} which has only $1747$ points in $\mathbb{R}^{18}$, the cumulative size of the Sparse Rips filtration is $45.6$ million, which approaches the limit GUDHI or Simpers can handle. For other larger data sets such as Primary Circle or Survivin, the complex reaches a size beyond this limit. Moreover, one has to pre-compute a greedy permutation of the input point set before constructing the Sparse Rips filtration. This computation is usually costly requiring 
furthest point computations for which  
software as efficient as ANN (for nearest neighbors) is not available.
This motivates us to consider the batched approach considered next.

\subsection{Batch-collapsed Rips}
For handling large and high dimensional
data, we need a more aggressive sparsification than the Sparse Rips filtration. 
We consider the Batch-collapsed Rips filtration, which has been proposed 
previously in \cite{DFW2014} (section 6.1). 

Given a set of points $P$, first set $V_0 := P$ and compute the shortest pairwise distance $\alpha$. We next construct a sequence of vertex sets $V_k, k \in [0,m]$ such that $V_{k+1}$ is an $\alpha c^{k+1}$-net of $V_k$ where $c > 1$
is a parameter that controls the rate of the scale increase. Consider
the vertex map $\vmap_k : V_k \rightarrow V_{k+1}$, for $k \in [0,m-1]$, such that for any $v \in V_k$, $\vmap_k(v)$ is $v$'s nearest neighbor in $V_{k+1}$. It can be shown that each vertex map $\vmap_k$ induces a well-defined simplicial map $\oldhmap_k: \Rips^{\alpha c^k \frac{3c-1}{c-1}}(V_k) \rightarrow \Rips^{\alpha c^{k+1} \frac{3c-1}{c-1}}(V_{k+1})$. 
The \emph{Batch-collapsed Rips filtration} is: 
\begin{align}
	\label{bcRips}
	\Rips^0(V_0) \xrightarrow{\oldhmap_0} \Rips^{\alpha c \frac{3c-1}{c-1}}(V_1) \cdots \xrightarrow{\oldhmap_{m-1}} \Rips^{\alpha c^m \frac{3c-1}{c-1}}(V_m).
\end{align}

Using the line of proof in~\cite{DFW2014}, one can show that the persistence of this sequence is a $3 \log(\frac{2}{c-1} + 3) $-approximation of the persistence diagram of Rips filtration given below.
\begin{align}
	\label{Ripsfiltration}
	\Rips^0(V_0) \hookrightarrow \Rips^{\alpha c}(V_0) \cdots \hookrightarrow \Rips^{\alpha c^m}(V_0).
\end{align}
The blowup in scale by the factor of $\frac{3c-1}{c-1}$ results from the
proof, which in practice causes some problems. We elaborate this further. 
To satisfy the approximation guarantee, one has to show that
the persistence modules arising from Batch-collapsed Rips in sequence (\ref{bcRips}) and the standard Rips in sequence (\ref{Ripsfiltration}) interleave. 
In particular, this requires that we have well-defined simplicial maps from complexes in sequence (\ref{bcRips}) to those in sequence (\ref{Ripsfiltration}) and vice versa. The multiplicative factor $\frac{3c-1}{c-1}$ is needed to ensure that there is a well-defined simplicial map  $\Rips^{\alpha c^k}(V_0) \rightarrow \Rips^{\alpha c^{k} \frac{3c-1}{c-1}}(V_{k})$, as $\Rips^{\alpha c^{k} \frac{3c-1}{c-1}}(V_{k})$ has to be sufficiently connected to include all the images of the simplices in the domain Rips complex $\Rips^{\alpha c^k}(V_0)$. 
The side effect of this is that the Batch-collapsed Rips complex has to be built at a much larger scale than the Rips complex, and it ends up with many unnecessary connections and thus more simplices in practice. This also causes a trade-off: Larger $c$ reduces the over-connection but results in a worse approximation factor leading to a worse approximation quality. It is not clear how to set an increase rate that achieves both good approximation quality and efficiency for a specific data set. 

We experimented Batch-collapsed Rips with Simpers on the same MotherChild model. Figure \ref{fig:FE_c_pers_13}, \ref{fig:FE_c_pers_15} and \ref{fig:FE_c_pers_20} show the persistence barcodes for different values of $c$. Observe that smaller values of $c$ give better approximation. The barcode for $c = 1.3$ is the most similar among the three to that of Sparse Rips filtration in Figure \ref{fig:FE_a_pers_ori} which is supposed to be more accurate theoretically. On the other hand, when $c$ grows more than $1.8$, it starts to lose some main bars in $H_1$ and noisy bars get longer in $H_2$. On the other hand, Figure \ref{fig:FE_cd_compare} shows that, as $c$ increases, both complex size and time cost decrease drastically. When $c = 2.0$, it only involves less than $216K$ simplices and takes time $9.4$ seconds while, although $c = 1.3$ gives more accurate barcode, its size ($22.5$ million) and time ($325$s) approach those of the Sparse Rips. This demonstrates the dilemma that Batch-collapsed Rips faces in practice. 
We address this issue in our new approach SimBa. In particular, when
$c \le 2$, SimBa performs better than Batch-collapsed Rips for both size and time as shown in Figure \ref{fig:FE_cd_compare} while capturing all main bars correctly as shown in Figure \ref{fig:FE_cd_pers}.

\section{SimBa}
\label{sec:SimBa}
To tame the over-connection in Batch-collapsed Rips, we replace 
the sequence in~(\ref{bcRips}) with the sequence 
below where the parameter does not
incur the extra factor $\frac{3c-1}{c-1}$: 
\begin{equation}
	\label{SimBafiltration}
	\SetRips^0(V_0) \rightarrow
	\SetRips^{\alpha c}(V_1) \rightarrow
	\cdots
	\SetRips^{\alpha c^m}(V_m)
\end{equation}
The complexes $\SetRips^{\alpha c^k}(V_k)$ are built on the same
vertex sequence $\{V_k\}$ as in Batch-collapsed Rips, but the
distances among the vertices of $V_k$ are replaced with a {\em set distance}
which allows us to avoid the over-connection. For two sets of points (clusters)
$A,B\subset P$, we define their set distance as 
$d(A,B)=\min_{a\in A, b\in B} d(a,b)$. The sets that we consider
are the pre-images of the vertices in $V_k$ under the composition of
projections $\pi_i$'s, namely, for a vertex $v\in V_k$, we consider
the set 
\begin{equation*}
B_v^k = \{p \in V_0\ |\ \hv_k (p) = v \} ~~\text{where}~~\hv_k : V_0 \to V_k~~\text{is defined as}~~\hv_k = \vmap_{k-1} \circ \cdots \circ \vmap_0. 
\end{equation*}
The complex $\SetRips^{\alpha c^k} (V_k)$ is simply the clique complex 
induced by edges $\{ (u, v) \in V_k \mid d(B_u^k,B_v^k)\leq \alpha c^k\}$. 
Observe that $d(u,v)\geq d(B_u^k,B_v^k)$
which ensures that the normal connection between $u$ and $v$ for a Rips
filtration at the respective scale is not missed by considering the set
distance while still avoiding the over-connection.

It turns out that each vertex map (nearest neighbor projection) $\vmap_k: V_k \to V_{k+1}$ induces a simplicial map $h_k: \SetRips^{\alpha c^k}(V_k) \to \SetRips^{\alpha c^{k+1}}(V_{k+1})$. Instead of recomputing the simplicial complex each time, we generate elementary insertion and collapse operations incrementally for each $h_k$ in three steps: (i) collapse each $v \in V_k \setminus V_{k+1}$ to its image $\vmap_k(v)$ in $V_{k+1}$ along with all incident simplices, (ii) insert new edges between two vertices in $V_{k+1}$ if the distance between the two sets they represent are smaller than or equal to the current scale, and (iii) insert all new clique simplices containing new edges generated by (i) and (ii). Each $h_k$ is processed in one batch, starting from a simplicial complex on vertices in $V_k$ and resulting in a simplicial complex on vertices in $V_{k+1}$. The collapse and insertions of new simplices are exactly what
Simpers need for computing the persistence. 

\subsection{Implementation Details}
The advantage of SimBa (and Batch-collapsed Rips) over Sparse Rips filtrations is mainly due 
to the batched approach, which requires us to compute $\delta$-nets of a point set for some $\delta$ repeatedly. 
Its advantage over the Batch-collapsed Rips is credited to the use of set distances. 
These computations require k-nearest neighbor search and fixed radius search for which efficient
library like ANN \cite{ANN2010} exists. We take advantage of this available software. 

To compute a $\delta$-net of a given point set (to obtain $V_{k+1}$ from $V_k$), we randomly pick an untouched point, say $p$, use fixed-radius search to find all points in the ball of radius $\delta$ around $p$, map them to it, and mark them processed. We do this repeatedly until there is no untouched point left. We observe that this sub-sampling 
procedure can be carried out faster at early stage when $\delta$ is small because those points whose nearest neighbor distances are larger than the current $\delta$ can be taken directly into the net--they are all mapped to themselves and no other points are mapped to them. So, we maintain a list $L$ of the points ordered by their nearest neighbor distances in increasing order and process them sequentially for $\delta$-net computations. To compute the net points $V_{k+1}$ from $V_k$, we carry out the full sub-sampling process only on the points in $V_k$ that are already known to have nearest neighbor distances below $\delta$ and the new ones that qualify from $L$ for increased $\delta$. 
After $\delta$ becomes more than the largest nearest neighbor distance, 
we convert to the usual net computation.

Next, we describe an efficient implementation of the set distance computation, which being a basic operation in SimBa, speeds it up significantly. A straightforward implementation requires quadratic time, but we can make it more efficient in practice with the help of the ANN library. We use a hybrid strategy as follows. The sets $B_u^k$ for vertices $u\in V_k$ are maintained by a union-find data structure. As vertices are collapsed while going from $V_k$ to $V_{k+1}$,
the sets of the collapsed vertices are merged to that of the target vertex.
At early stages, when the number of sets (i.e, the size of $V_k$) is large and the diameter of each set is potentially small, we avoid computing set distances for all pairs. 
For each processing set $B_u^k$, we only need to find all the sets $B_v^k$ whose distances to $B_u^k$ are smaller than the current scale $\alpha' = \alpha c^k$. 
If so, we add an edge between $u$ and $v$. To find all these nearby sets, we can do a fixed-radius search in $V_0 = P$ around each point in $B_u^k$ within $\alpha'$ distance. For each point $v$ returned by the search, we find $v$ in the union-find data structure to identify its image $\hv_k(v) \in V_k$. If the representing set of $v$ is different from that of $u$, we add the edge $\hv_k(u) \hv_k(v)$. 

Later when $\alpha'$ becomes large, it may not be efficient to continue this fixed-radius search, as the number of candidate points from $P$ may be too large (can be $n$ in the worst case). 
So we fall back on pairwise set distance computation. 
In particular, when the cardinality of $V_k$ becomes lower than a threshold, say $1/10$ of the number of input points, we compute a pairwise set distance matrix (of size $|V_k| \times |V_k|$) among the surviving sets once and then keep updating the matrix with batch collapse thereafter. In particular, note that given sets $A, B, $ and $C$, the set distance $d( A \cup B, C) = \min \{ d(A, C), d(B, C) \}$. 

\subsection{SimBa on MotherChild model}

We compare SimBa with other approaches on the same MotherChild model. Figure \ref{fig:FE_d_pers_13}, \ref{fig:FE_d_pers_15} and \ref{fig:FE_d_pers_20} show the persistence barcodes computed by SimBa with different values of 
$c$. We see that SimBa captures all the main 0, 1, 2-dimensional bars for all values of $c$ in the range
from $1.3$ to $2.0$ as opposed to Batch-collapsed Rips which fails to capture the main $H_1$ bars for $c > 1.8$. 
It tolerates larger range of $c$ and thus is more robust than Batch-collapsed Rips. As expected, larger values of $c$ produce less bars since there are less batches. So, in practice, we should choose smaller $c$, say less than $1.5$. More importantly, as Figure \ref{fig:FE_cd_compare} shows, the size and time for SimBa are also stable against different values of $c$, all less than $100K$ simplices and $10$ seconds respectively for $c \le 2$. These are less than those for Batch-collapsed Rips and significantly less than those for Sparse Rips: In particular, when $c=1.3$, the maximum size for SimBa is $100K$, similar to when $c=2$. However, for Batch-collapsed Rips, the maximum size is closer to that
of SimBa when $c=2$, and is $22.5$ and $1.4$ million when $c=1.3$ and $c=1.5$
respectively. This size difference becomes even more prominent for high dimensional data, as Table \ref{fig:table} shows.
Although the approximation quality of SimBa is slightly
worse than that of Sparse Rips based approaches, it does capture all 
the main bars, and more importantly, costs significantly less time. This advantage allows SimBa to process much larger high dimensional data sets which no previous approaches can handle, as we illustrate in section \ref{sec:Exp}.

\begin{figure}[h]
	\centering
	\begin{subfigure}{0.32\textwidth}
		\centering\includegraphics[height=0.8\textwidth]{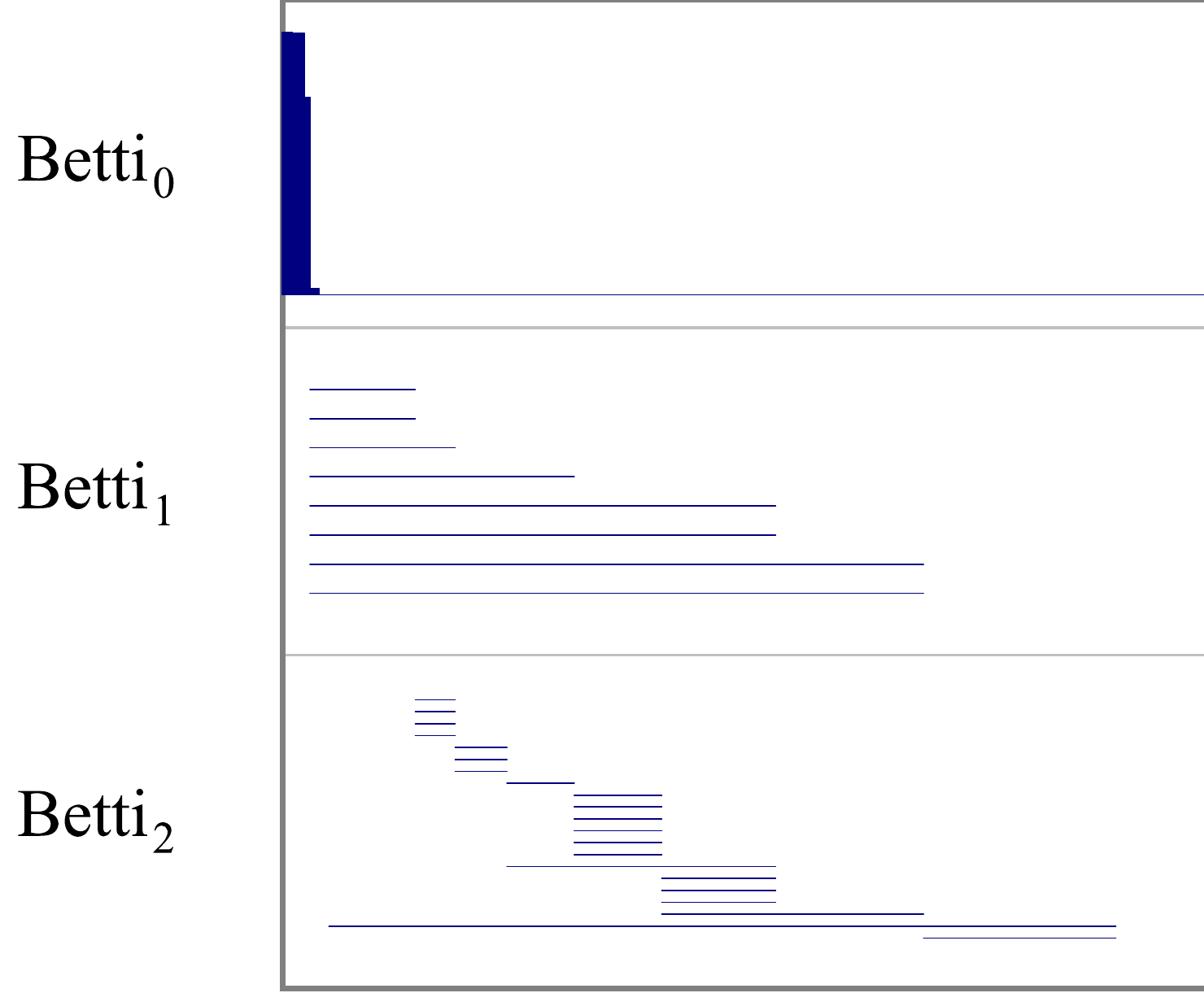}
		\caption{B.R. (c = 1.3)}\label{fig:FE_c_pers_13}
	\end{subfigure}
	\begin{subfigure}{0.32\textwidth}
		\centering\includegraphics[height=0.8\textwidth]{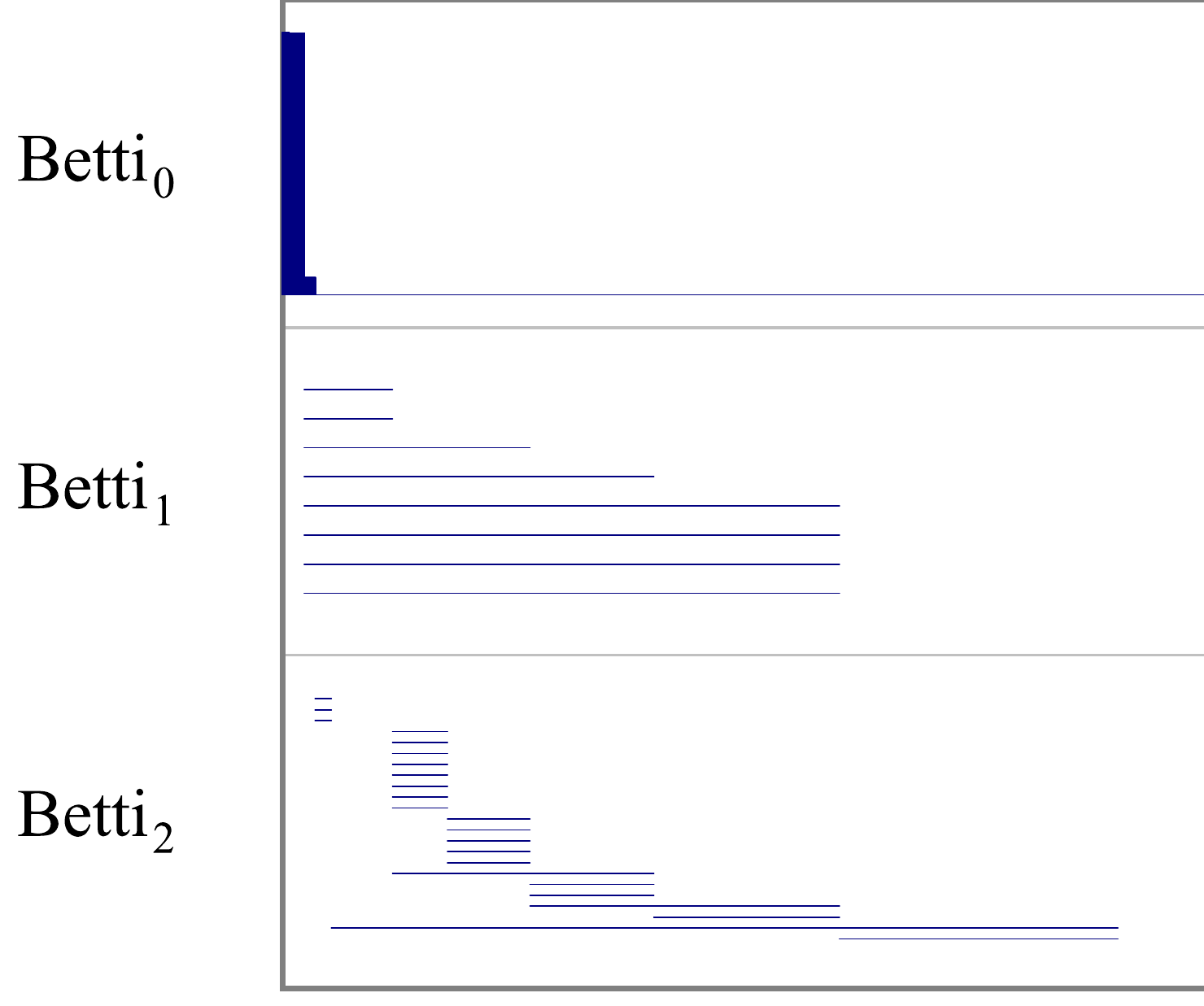}
		\caption{B.R. (c = 1.5)}\label{fig:FE_c_pers_15}
	\end{subfigure}
	\begin{subfigure}{0.32\textwidth}
		\centering\includegraphics[height=0.8\textwidth]{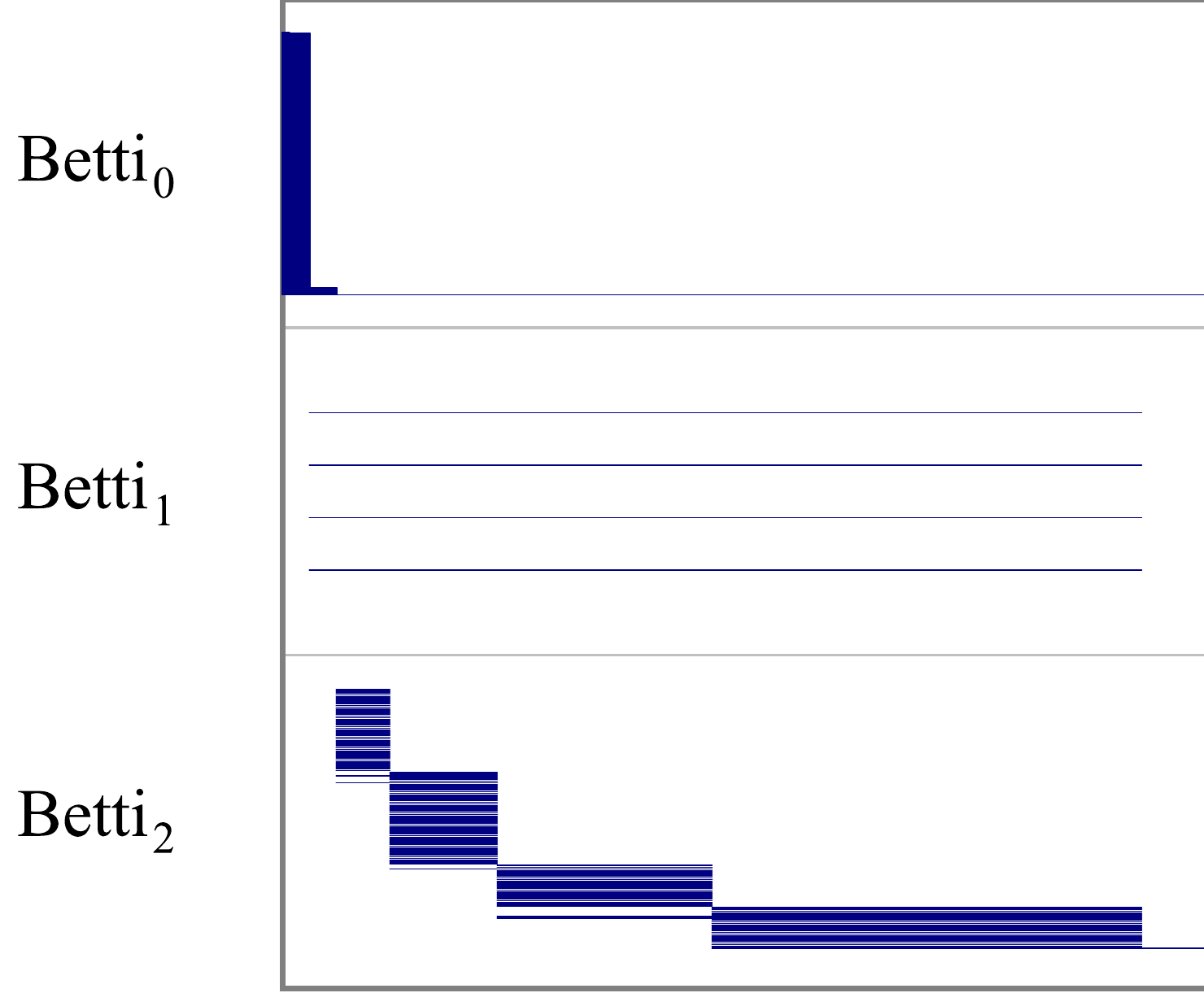}
		\caption{B.R. (c = 2.0)}\label{fig:FE_c_pers_20}
	\end{subfigure}
	
	\begin{subfigure}{0.32\textwidth}
		\centering\includegraphics[height=0.8\textwidth]{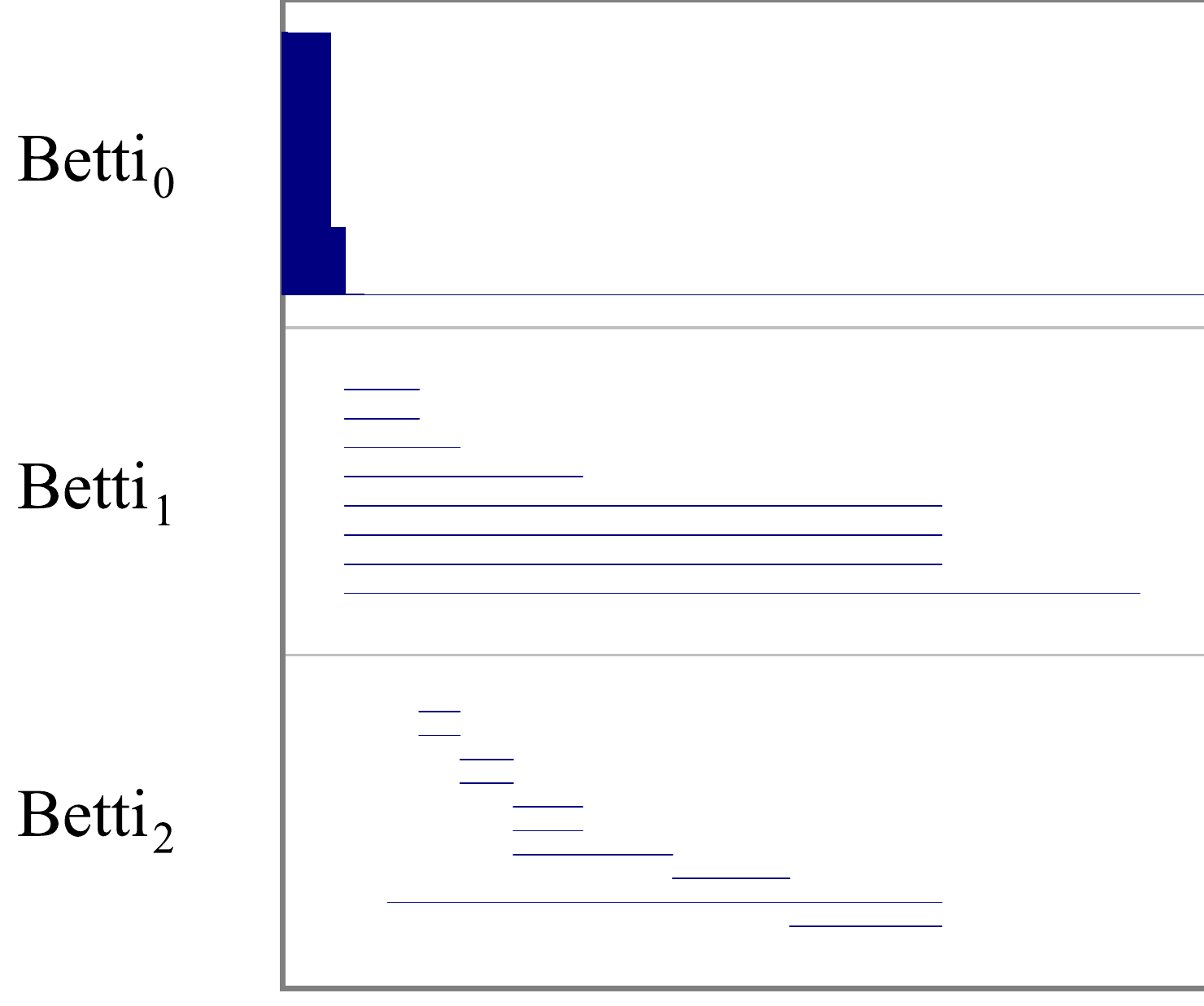}
		\caption{SimBa (c = 1.3)}\label{fig:FE_d_pers_13}
	\end{subfigure}
	\begin{subfigure}{0.32\textwidth}
		\centering\includegraphics[height=0.8\textwidth]{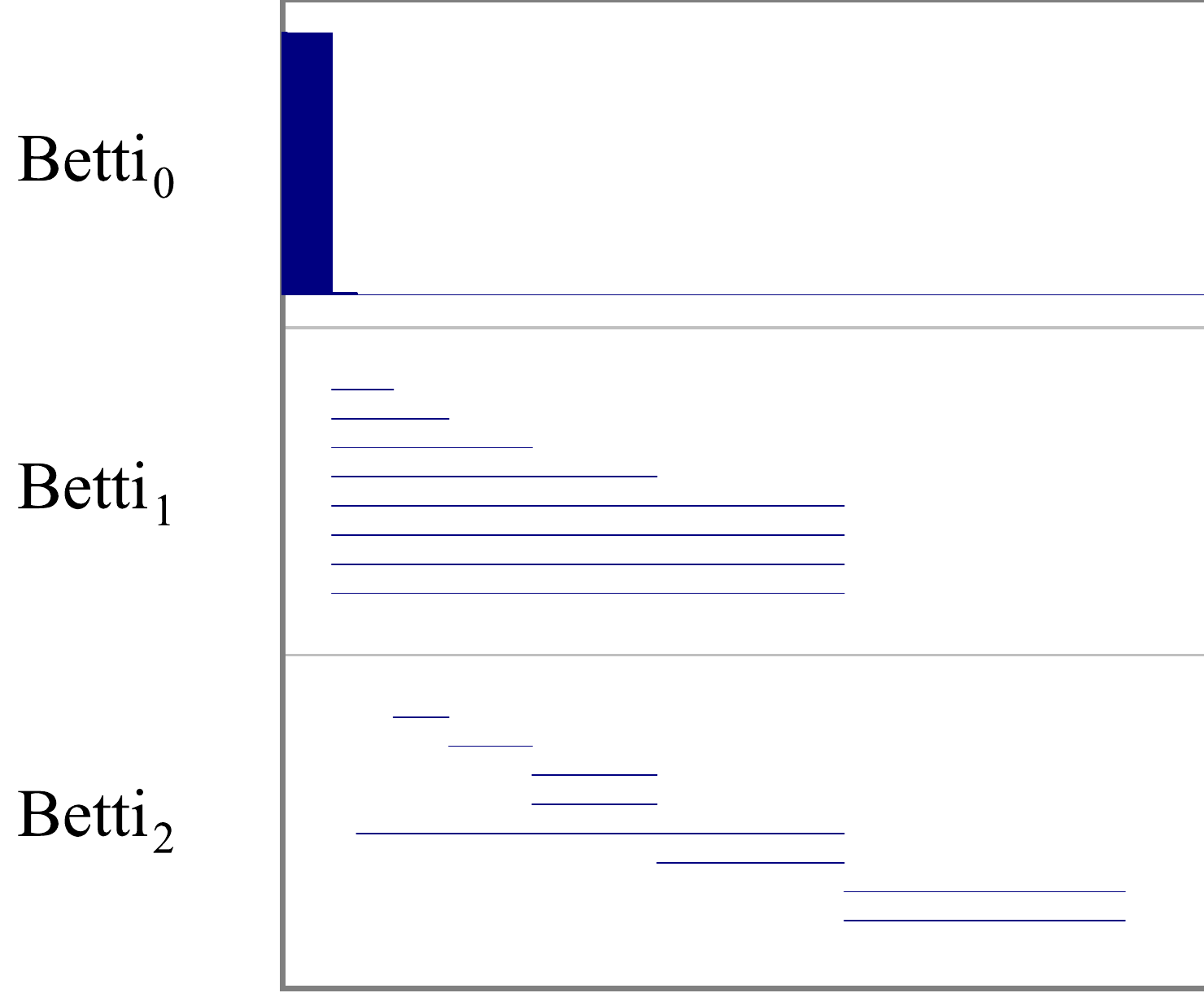}
		\caption{SimBa (c = 1.5)}\label{fig:FE_d_pers_15}
	\end{subfigure}
	\begin{subfigure}{0.32\textwidth}
		\centering\includegraphics[height=0.8\textwidth]{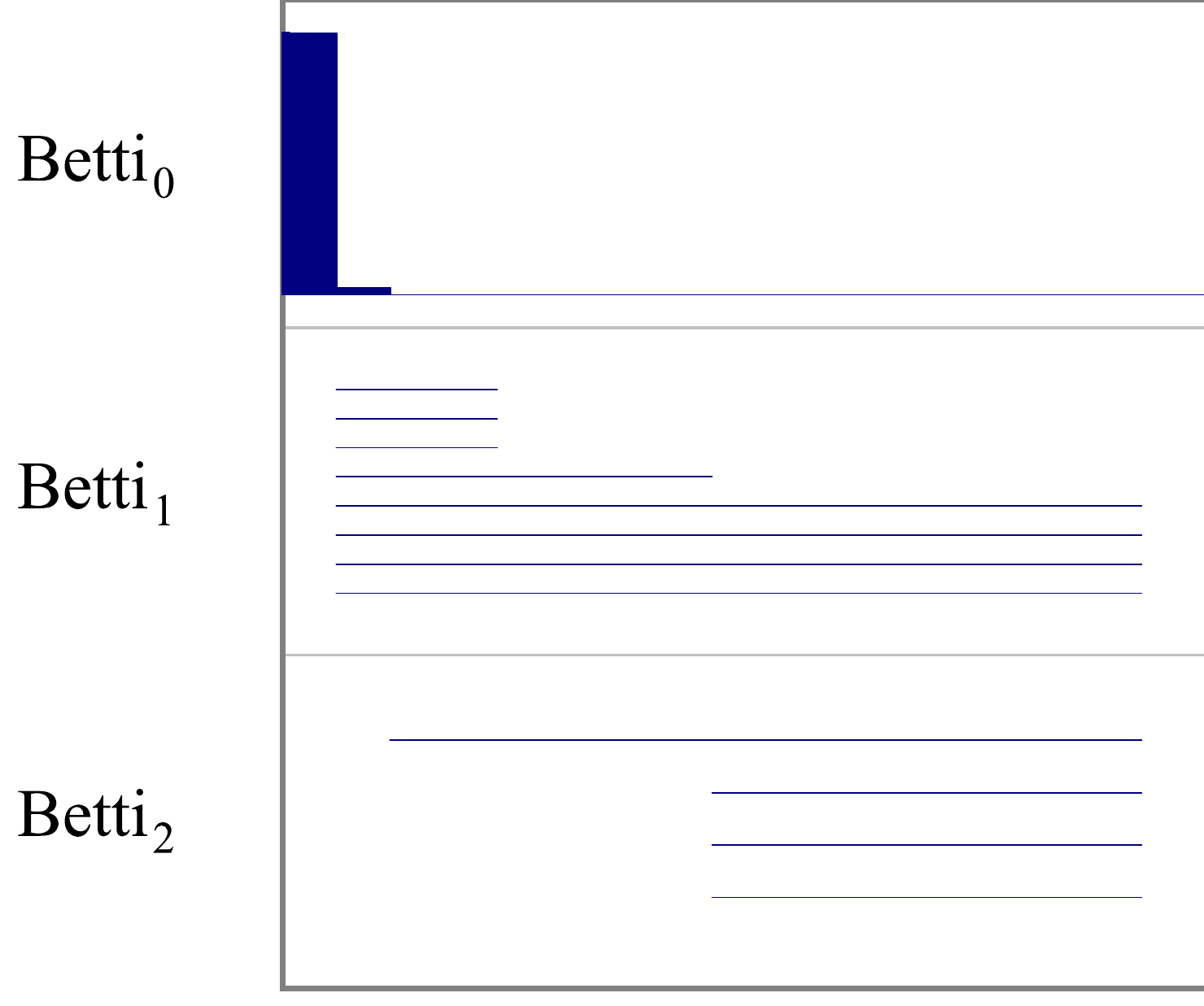}
		\caption{SimBa (c = 2.0)}\label{fig:FE_d_pers_20}
	\end{subfigure}
	
	\caption{Persistence barcodes computed by Batch-collapsed Rips plus Simpes (B.R.) and SimBa on the same MotherChild model. B.R. captures main bars for $H_1$ correctly for smaller values of $c$ as shown in Figure (a) and (b) and loses some for $c = 2.0$ as shown in Figure (c), while SimBa works for $c = 2.0$.}
	\label{fig:FE_cd_pers}
\end{figure}
\begin{figure}
	\centering
	\begin{subfigure}{0.48\textwidth}
		\centering\includegraphics[height=0.8\textwidth]{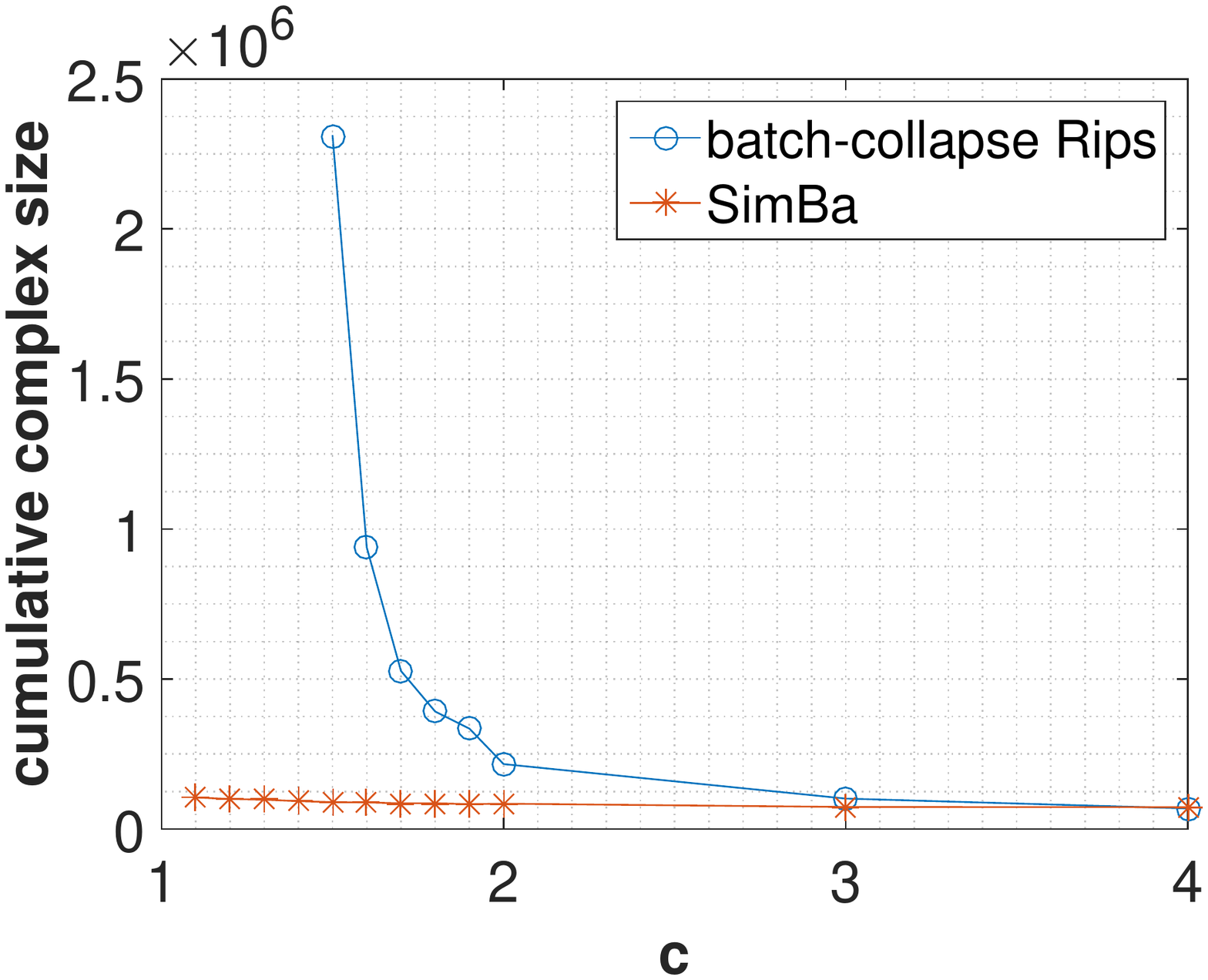}
		\caption{cumulative size}\label{fig:FE_cd_accusize}
	\end{subfigure}
	\begin{subfigure}{0.48\textwidth}
		\centering\includegraphics[height=0.8\textwidth]{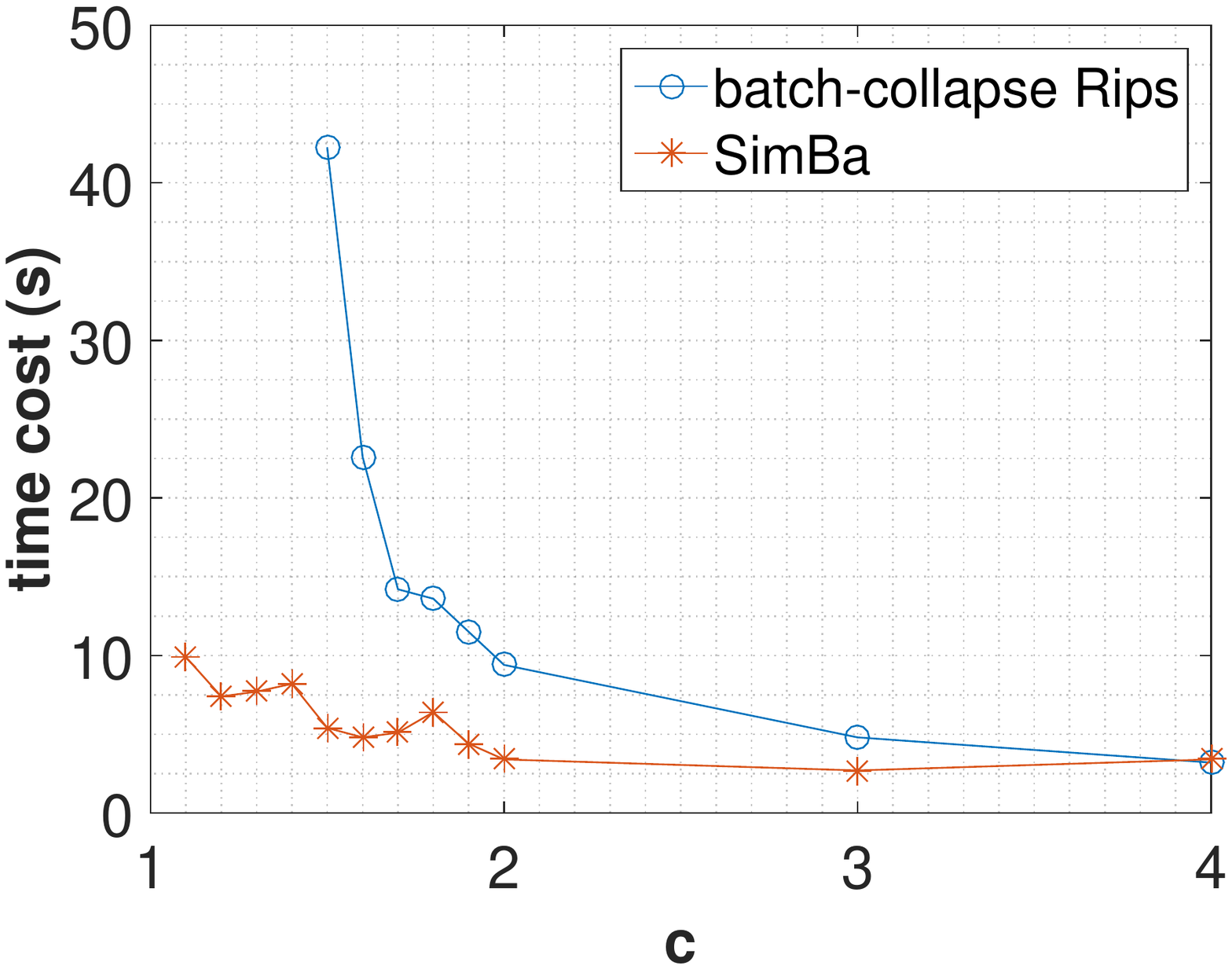}
		\caption{time cost}\label{fig:FE_cd_time}
	\end{subfigure}
	\caption{Complex size and time cost comparison between Batch-collapsed Rips and SimBa. SimBa beats Batch-collapsed Rips for both size and time when $c \le 2$. For $c > 2$, the barcodes of both batch-based approaches become too coarse to be useful in practice.}
	\label{fig:FE_cd_compare}
\end{figure}

\section{Approximation guarantee of SimBa}
Recall that the simplicial complex $\SetRips^{\alpha}(V_k)$ appearing in SimBa's filtration 
is defined as:
\begin{equation*}
	\SetRips^{\alpha}(V_k) = \{\sigma \subset V_k\ |\ \forall u, v \in \sigma, d(B_u^k, B_v^k) \le \alpha \}.
\end{equation*}
We prove that the persistence barcodes of SimBa's filtration in sequence (\ref{SimBafiltration}) approximates those of the Rips filtration in (\ref{Ripsfiltration}) by showing that the persistence modules induced by these two sequences interleave. 

First, observe that each vertex map $\vmap_k$ induces a well-defined simplicial map $h_k : \SetRips^{\alpha c^k}(V_k) \rightarrow \SetRips^{\alpha c^{k+1}}(V_{k+1})$. Indeed, for any edge $\{u, v\}$ in $\SetRips^{\alpha c^k}(V_k)$, suppose $u' = \vmap_k(u), v' = \vmap_k(v)$, then $B_u^k \subset B_{u'}^{k+1}$ and $B_v^k \subset B_{v'}^{k+1}$. So we have
	$d(B_{u'}^{k+1},B_{v'}^{k+1}) \leq d(B_{u}^k,B_{v}^k) \leq \alpha c^k < \alpha c^{k+1}.$ 
Therefore $\{u', v'\}$ must be an edge in $\SetRips^{\alpha c^{k+1}}(V_{k+1})$ as well. Since each complex in SimBa's filtration is a clique complex determined by edges, every simplex in $\SetRips^{\alpha c^k}(V_k)$ has a well-defined image in $\SetRips^{\alpha c^{k+1}}(V_{k+1})$. Thus, each $h_k$ is well-defined.

Recall that the map $\hv_k : V_0 \rightarrow V_{k+1}$ is defined as $\hv_k(v) = \vmap_k\circ\cdots\circ\vmap_0(v)$, which tracks the image of any point in $V_0 = P$ during the batch collapse process. Observe that the vertex map $\hv_k$ also induces a simplicial map $\hh_k :\Rips^{\alpha c^k}(V_0) \rightarrow \SetRips^{\alpha c^{k+1}}(V_{k+1})$: specifically, for any edge $(u, v) \in \Rips^{\alpha c^k}(V_0)$ with $d(u,v) \le \alpha c^k$, it is easy to see that $d(B_{\hv_k(u)}^k, B_{\hv_k(v)}^k) \le d(u, v) \le \alpha c^k  < \alpha c^{k+1}$, implying that $(\hv_k(u), \hv_k(v))$ is an edge in $\SetRips^{\alpha c^{k+1}}(V_{k+1})$.  
The key observation is the following lemma. 
\begin{lemma}
	\label{claim:trianglescommute}
	Each triangle in the diagram below commutes at homology level, where $i_k$ and $j_k$ are induced by inclusions, $\hmap_{k, t}:= \hmap_{k+t-1} \circ \cdots \circ \hmap_k$, $c > 1$, $t \ge \log_c(\frac{2}{c-1}+3)$ and $t \in Z$. 
$$	\xymatrix @R=1.0pc @C=1.0pc  
{\Rips^{\alpha c^k}(V_0) \ \ar@{^{(}->}[rr]^-{i_k} \ar[d]^{\hh_{k}}
		&  & \  
		\Rips^{\alpha c^{k+t}}(V_0) \ \ar[d]^{\hh_{k+t}} \\
		\SetRips^{\alpha c^k}(V_k) \ \ar@{^{(}->}[rru]^{j_k} \  \ar[rr]^{\hmap_{k, t}}  
		&  & \  \SetRips^{\alpha c^{k+t}}(V_{k+t})  }
$$ 
\end{lemma}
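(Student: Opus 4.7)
The plan is to prove two separate commutativity claims. The upper triangle ($i_k$ vs.\ $j_k\circ\hh_k$) I would handle by a \emph{contiguity} argument on simplicial maps, while the lower triangle ($\hmap_{k,t}$ vs.\ $\hh_{k+t}\circ j_k$) I expect to commute strictly on the nose. Both arguments rest on the same geometric bound: for any $p\in V_0$, telescoping the nearest-neighbor hops $p\to \vmap_0(p)\to \cdots \to \hv_k(p)$ and using that $V_i$ is an $\alpha c^i$-net of $V_{i-1}$ gives $d(p,\hv_k(p))\le \alpha(c+c^2+\cdots+c^k) = \alpha c(c^k-1)/(c-1)$. Consequently, for an edge $\{u,v\}\in \SetRips^{\alpha c^k}(V_k)$ witnessed by $p\in B_u^k$, $q\in B_v^k$ with $d(p,q)\le \alpha c^k$, the triangle inequality yields $d(u,v)\le \alpha c^k\cdot (3c-1)/(c-1)$. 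Since $(3c-1)/(c-1)=3+2/(c-1)$, the hypothesis $t\ge \log_c(2/(c-1)+3)$ is precisely $c^t\ge (3c-1)/(c-1)$, so $d(u,v)\le \alpha c^{k+t}$. As both complexes are clique complexes, this proves $\SetRips^{\alpha c^k}(V_k)\subseteq \Rips^{\alpha c^{k+t}}(V_0)$, making $j_k$ a legitimate inclusion.

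For the upper triangle, I would show that $i_k$ and $j_k\circ \hh_k$ are contiguous as simplicial maps into $\Rips^{\alpha c^{k+t}}(V_0)$; contiguous simplicial maps induce identical homomorphisms on homology, so this suffices. Concretely, for every simplex $\sigma=\{v_0,\dots,v_s\}\in\Rips^{\alpha c^k}(V_0)$ I must verify that $\sigma\cup\hv_k(\sigma)$ spans a simplex of $\Rips^{\alpha c^{k+t}}(V_0)$, that is, all pairwise distances are at most $\alpha c^{k+t}$. Three cases arise: (i) original pairs $d(v_i,v_j)\le \alpha c^k$; (ii) image pairs $d(\hv_k(v_i),\hv_k(v_j))$, handled by the previous paragraph after noting $d(B_{\hv_k(v_i)}^k,B_{\hv_k(v_j)}^k)\le d(v_i,v_j)\le \alpha c^k$; (iii) mixed pairs $d(v_i,\hv_k(v_j))\le d(v_i,v_j)+d(v_j,\hv_k(v_j)) \le \alpha c^k+\alpha c(c^k-1)/(c-1)$, which is strictly smaller than the case (ii) bound. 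All three are therefore $\le \alpha c^{k+t}$, giving contiguity.

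For the lower triangle, I expect the stronger statement that $\hh_{k+t}\circ j_k$ and $\hmap_{k,t}$ coincide as simplicial maps, not merely on homology. For any vertex $u\in V_k$, $\hh_{k+t}\circ j_k(u)=\hv_{k+t}(u)=\vmap_{k+t-1}\circ \cdots \circ \vmap_0(u)$; since $u\in V_k\subseteq V_{i+1}$ for every $i<k$, each projection $\vmap_i$ (nearest neighbor in $V_{i+1}$) fixes $u$, so the composition collapses to $\vmap_{k+t-1}\circ\cdots\circ\vmap_k(u)=\hmap_{k,t}(u)$. Agreement on vertices of simplicial maps promotes to agreement on all simplices. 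The main conceptual content of the lemma is therefore the upper-triangle contiguity, whose pivot is the identification $(3c-1)/(c-1)=3+2/(c-1)$ that links the geometric telescoping bound precisely to the hypothesis on $t$; everything else reduces to bookkeeping together with the tautology $\vmap_i|_{V_{i+1}}=\mathrm{id}$.
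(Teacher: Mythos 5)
Your proof is correct and follows essentially the same route as the paper's: establish $j_k$ as a genuine inclusion via the telescoping net bound $d(p,\hv_k(p))<\alpha c^{k+1}/(c-1)$, prove the top triangle by contiguity of $i_k$ and $j_k\circ\hh_k$ (checking the same three cases of vertex pairs), and observe that the bottom triangle commutes strictly since $\vmap_i$ fixes $V_{i+1}\supseteq V_k$. The algebraic pivot $\frac{3c-1}{c-1}=\frac{2}{c-1}+3$ linking the geometry to the hypothesis on $t$ is exactly the paper's computation.
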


\begin{proof}
	First, we prove that there is indeed an inclusion map $j_k : \SetRips^{\alpha c^k}(V_k) \hookrightarrow \Rips^{\alpha c^{k+t}}(V_0)$.  
	In particular, we show for each edge $(u, v)$ in $\SetRips^{\alpha c^k}(V_k)$, it's also an edge in $\Rips^{\alpha c^{k+t}}(V_0)$. Suppose the set distance $d(B_u^k, B_v^k)$ is achieved by the closest pair $(u_0, v_0)$ between the two sets where $u_0 \in B_u^k, v_0 \in B_v^k$. Then $d(B_u^k, B_v^k) = d(u_0, v_0) \leq \alpha c^k$. Since $V_{i+1}$ is an $\alpha c^{i+1}$-net of $V_i$ for each $i\in[0, k-1]$, it follows that $d(u, u_0) \leq \alpha c^k \sum_{i=0}^{k-1} \frac{1}{c^i} < \alpha c^k \frac{c}{c-1}$. Similar bound holds for $d(v, v_0)$. Thus: 
	\begin{align*}
	d(u,v) &\leq d(u, u_0) + d(v, v_0) + d(u_0, v_0) 
	\leq \alpha c^k (\frac{2c}{c-1} + 1) = \alpha c^k (\frac{2}{c-1} + 3) 
	\leq \alpha c^{k+t}.
	\end{align*} 
	Hence ${u, v}$ is an edge in $\Rips^{\alpha c^{k+t}}(V_0)$.
	 
	Next, observe that the vertex map $\hv_{k+t}$ restricted on the set of vertices $V_k$ is exactly the same as the vertex map $\vmap_{k, t}:= \vmap_{k+t-1} \circ \cdots \circ \vmap_k$ (this vertex map induces the simplicial map $\hmap_{k,t}$ in the diagram). Namely, for a vertex $u \in V_k \subseteq V_0$, $\hmap_{k,t}(u) = \hh_{k+t}(u)$. Thus $\hmap_{k,t} = \hh_{k+t} \circ j_k$. Hence the bottom triangle commutes both at the complex and the homology level. 
	
	We now consider the top triangle. Specifically, we prove that the map $j_{k} \circ \hh_{k}$ is contiguous to the inclusion map $i_k$.
	Since two contiguous maps induce the same homomorphisms at the homology level, the top triangle commutes at the homology level. 
	
	Indeed, given a simplex $\sigma \in \Rips^{\alpha c^k}(V_0)$, we need to show that vertices from $i_k(\sigma) \cup j_k\circ\hh_{k}(\sigma)$ span a simplex in $\Rips^{\alpha c^{k+t}}(V_0)$. 
	Since both are Rips complexes and $i_k$ and $j_k$ are inclusion maps, we only need to prove that for any two vertices $u$ and $v$ from $\sigma \cup \hh_{k}(\sigma)$, $d(u, v) \leq \alpha c^{k+t}$ (namely, ($u$,$v$) is an edge in $\Rips^{\alpha c^{k+t}}(V_0)$). 
	If $u$ and $v$ are both from $\sigma$ or both from $\hh_{k}(\sigma)$, 
	then $d(u,v) \leq \alpha c^{k+t}$ trivially. 
	Otherwise, assume without loss of generality that $v \in \sigma$ and $u \in \hh_{k}(\sigma)$, where $u=\hv_{k}(u')$ for some $u' \in \sigma$. Since $V_{i+1}$ is an is an $\alpha c^{i+1}$-net of $V_i$ for each $i\in[0, k-1]$, it follows that $d(u, u') \leq \alpha c^k \sum_{i=0}^{k-1} \frac{1}{c^i} < \alpha c^k \frac{c}{c-1}$. 
	One then has 
	\begin{eqnarray*}
		d(u,v) &\leq& d(u,u') + d(u',v) 
		\leq  
		\alpha c^k \frac{c}{c-1} + \alpha c^k = \alpha c^k \frac{2c-1}{c-1} 
		<  \alpha c^k (\frac{2}{c-1}+3) \leq \alpha c^{k+t}.
	\end{eqnarray*} 
	Thus $i_k$ is contiguous to $j_k \circ \hh_k$ and the lemma follows. 
\end{proof}

The above result implies that the persistence modules induced by 
sequences (\ref{SimBafiltration}) and (\ref{Ripsfiltration}) are weakly $\log c^t$-interleaved at the log-scale. Since $t \ge \log_c(\frac{2}{c-1}+3)$, we have $c^t \ge \frac{2}{c-1} + 3$.
By Theorem 4.3 of \cite{CSGGO2009}, we conclude with the following: 
\begin{theorem}
	The persistence diagram of the sequence (\ref{SimBafiltration}) provides a $3 \log(\frac{2}{c-1} + 3) $-approximation of the persistence diagram 
	of the sequence (\ref{Ripsfiltration}) at the log-scale for $c > 1$.
\end{theorem}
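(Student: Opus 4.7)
The plan is to deduce the theorem directly from Lemma~\ref{claim:trianglescommute} by packaging the commuting triangles into a weak interleaving between the two persistence modules on a common log-scale, and then invoking the standard stability result for weakly interleaved modules.

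First I would re-parameterize both filtrations logarithmically. Setting $\beta = \log\alpha$ (or equivalently indexing by $k \mapsto k\log c$), the Rips sequence~(\ref{Ripsfiltration}) and the SimBa sequence~(\ref{SimBafiltration}) become persistence modules over a shifted arithmetic progression on the real line. Under this change of variable, composing $t$ consecutive maps in either sequence corresponds to a shift by $t\log c$, so proving a multiplicative shift of $c^t$ in the original scale becomes an additive shift of $t\log c$ at the log-scale. I would pick $t$ to be the smallest positive integer with $t \ge \log_c\!\bigl(\tfrac{2}{c-1}+3\bigr)$, which by construction satisfies $c^t \ge \tfrac{2}{c-1}+3$ and is exactly the hypothesis required by the lemma.

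Next I would assemble the interleaving. Lemma~\ref{claim:trianglescommute} supplies, for every $k$, a pair of simplicial maps $\hh_k \colon \Rips^{\alpha c^k}(V_0) \to \SetRips^{\alpha c^{k+1}}(V_{k+1})$ and $j_k \colon \SetRips^{\alpha c^k}(V_k) \hookrightarrow \Rips^{\alpha c^{k+t}}(V_0)$ whose induced homology maps make the top and bottom triangles commute. These are precisely the two families of shift maps needed to witness a weak $t\log c$-interleaving between the persistence module $\{H_p(\Rips^{\alpha c^k}(V_0))\}_k$ and the persistence module $\{H_p(\SetRips^{\alpha c^k}(V_k))\}_k$: one triangle says that going down into the SimBa module and then back up after $t$ steps agrees with the internal Rips shift, and the other triangle says the symmetric statement for going up and then down. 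Both are exactly the commutativity conclusions of the lemma.

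Finally I would invoke Theorem~4.3 of~\cite{CSGGO2009}, which converts a weak $\epsilon$-interleaving of persistence modules into a bottleneck bound of $3\epsilon$ on the resulting persistence diagrams (the factor of $3$ is the standard price for ``weak'' rather than ``strong'' interleaving). Plugging in $\epsilon = t\log c = \log c^t \le \log(\tfrac{2}{c-1}+3)$ gives a bottleneck distance at the log-scale of at most $3\log(\tfrac{2}{c-1}+3)$, yielding the claimed approximation factor. The only delicate bookkeeping step, and the one I would be most careful about, is verifying that the two directions of the weak interleaving are really furnished simultaneously by the single lemma (both triangles, for every $k$), and that the discrete index set can be treated as a sampling of the continuous log-scale so that Theorem~4.3 of~\cite{CSGGO2009} applies verbatim; everything else is a direct substitution.
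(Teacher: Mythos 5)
Your proposal follows the paper's argument exactly: Lemma~\ref{claim:trianglescommute} is packaged as a weak $t\log c$-interleaving of the two persistence modules at the log-scale, and Theorem~4.3 of~\cite{CSGGO2009} is then invoked to convert the weak interleaving into the factor-$3$ bottleneck bound. The only quibble is the direction of your final inequality: taking $t$ as the smallest integer with $c^t \ge \frac{2}{c-1}+3$ gives $\log c^t \ge \log(\frac{2}{c-1}+3)$, not $\le$, so the honest bound is $3t\log c$ --- but the paper glosses over the same ceiling, so this does not distinguish your route from theirs.
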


\section{Experiments}
\label{sec:Exp}

In this section, we report some experimental results of SimBa on large high dimensional data sets from other fields such as image processing, machine learning, and computational biology. For most of the data sets, previous approaches are not efficient enough to finish processing. They either ran out of memory (`$\infty$' in size) or ran more than one day (`$\infty$' in time). Table \ref{fig:table} at the end of this section provides the cumulative size and time cost for all four approaches mentioned in this paper. All approaches are implemented in C++. Note that we only compute persistences up to dimension 2 (which means we build simplicial complexes up to dimension 3). For Sparse Rips with GUDHI and Sparse Rips with Simpers, we choose parameter $\eps = 0.8$ which gives the best performance while not sacrificing  much of the approximation quality. For Batch-collapsed Rips with Simpers, we choose $c = 1.5$ which appears to reach a good trade-off between efficiency and quality. For SimBa, we choose $c = 1.1$ which in practice appears to have best quality -- note that the choice of $c$ does not seem to change the empirical efficiency much as Figure \ref{fig:FE_cd_compare}
illustrates. All experiments were run on a $64$-bit Windows machine with a 3.50GHz Intel processor and 16GB RAM. 

\paragraph*{Data with ground truth} 
We first test with two data sets whose ground truth persistences are known. They help demonstrate that SimBa works properly and efficiently in practice. All persistence barcodes shown in Figure \ref{fig:data_w_GT_pers} are original and not cleaned up.

We first consider a uniform sample of $22500$ points from
a Klein bottle in $\mathbb{R}^4$, and use SimBa to compute its barcode 
which is shown in Figure \ref{fig:KB_d_pers}. There are two main bars for $H_1$ and one for $H_2$ as expected. 

\begin{figure}[t]
	\centering
	\begin{subfigure}{0.32\textwidth}
		\centering\includegraphics[height=0.8\textwidth]{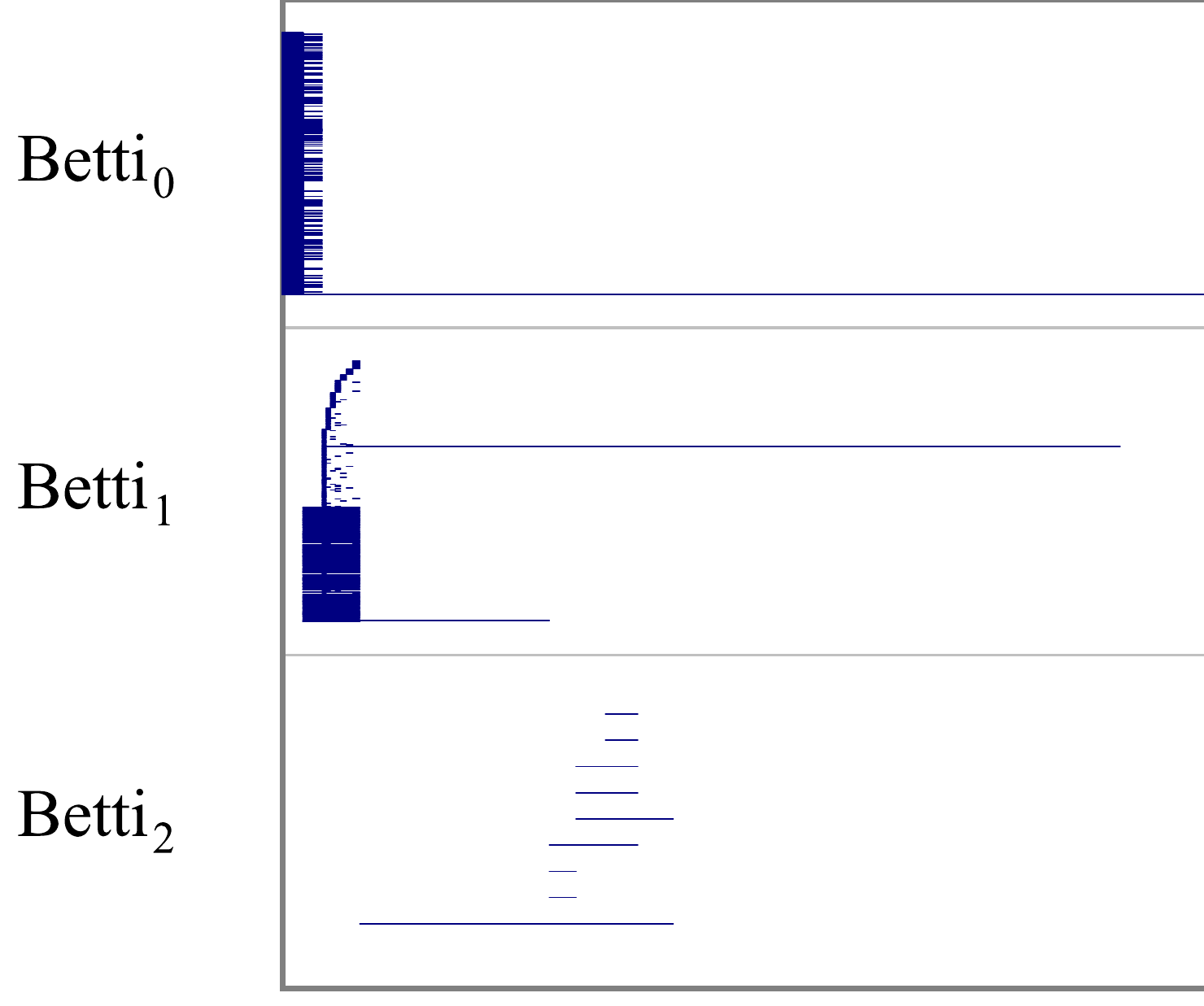}
		\caption{Klein Bottle in $\mathbb{R}^{4}$}\label{fig:KB_d_pers}
	\end{subfigure}
	\begin{subfigure}{0.32\textwidth}
		\centering\includegraphics[height=0.8\textwidth]{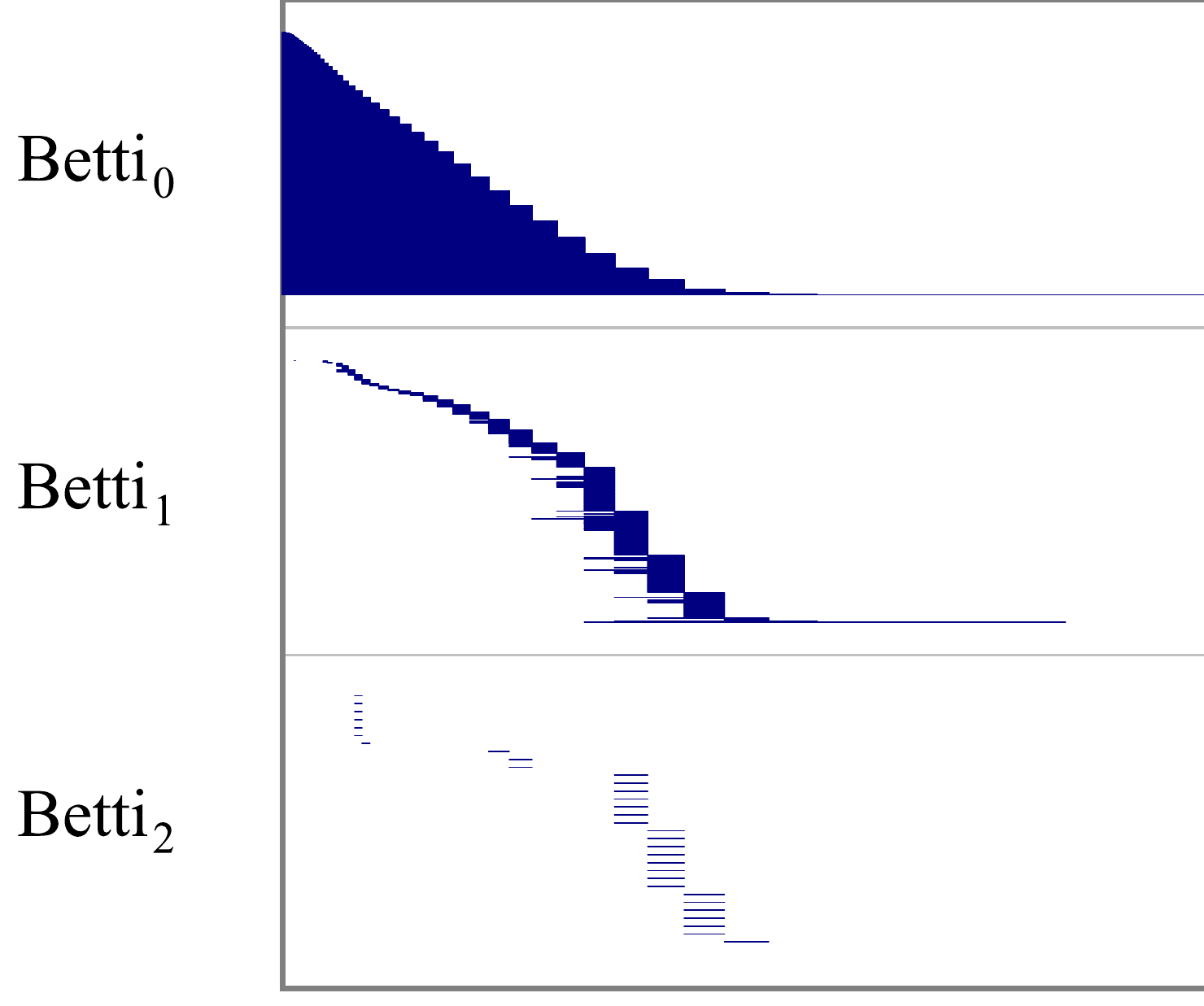}
		\caption{Primary Circle in $\mathbb{R}^{25}$}\label{fig:PC5_d_pers}
	\end{subfigure}
	\begin{subfigure}{0.32\textwidth}
		\centering\includegraphics[height=0.8\textwidth]{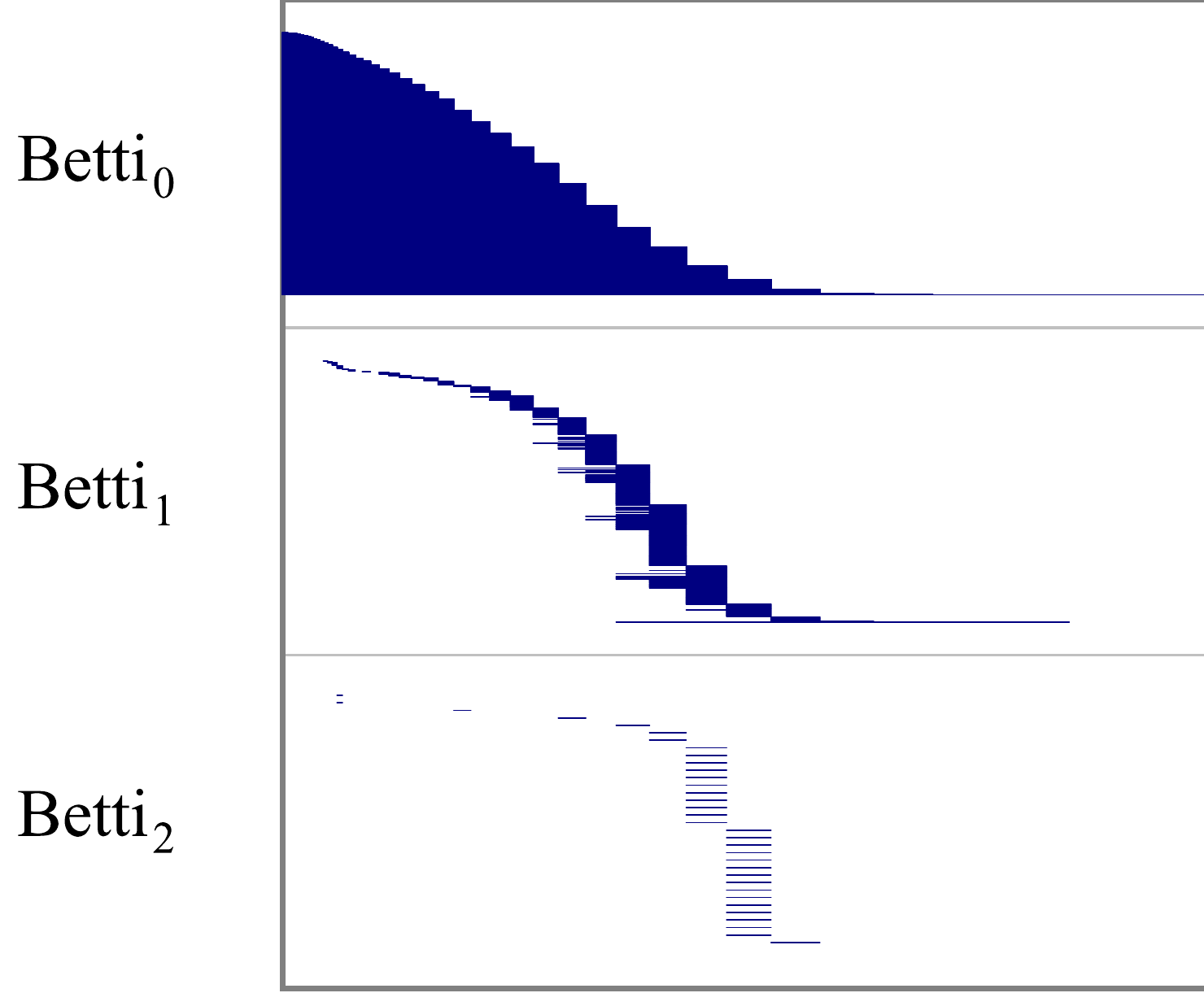}
		\caption{Primary Circle in $\mathbb{R}^{49}$}\label{fig:PC7_d_pers}
	\end{subfigure}

	\caption{Original persistence barcodes computed by SimBa on data sets with ground truth}
	\label{fig:data_w_GT_pers}
\end{figure}
Next, we consider the primary circle of natural image data in \cite{AC2009}, which has 15000 points. Each point is a 5$\times$5 or 7$\times$7 image patch, thus considered as a point in $\mathbb{R}^{25}$ or $\mathbb{R}^{49}$. From Figures \ref{fig:PC5_d_pers} and \ref{fig:PC7_d_pers}, we can see the primary circle bar for $H_1$ for data both in $\mathbb{R}^{25}$ and $\mathbb{R}^{49}$. All short bars for $H_2$ persist for only one batch step and thus 
can be regarded as noise. 

\paragraph*{Data without ground truth}
Next, we provide some experiments on the data sets whose ground truth persistences are not known. We used SimBa to compute their persistences and found some relatively long bars which are likely to be features and may worth further investigation by domain experts. The persistence barcodes shown in Figure \ref{fig:data_wo_GT_pers} and \ref{fig:GP_d_pers_den} are denoised for $H_1$. The rest of Figure \ref{fig:GP_compare_pers} are original.

We first take the Gesture Phase Segmentation data set \cite{MWP2014} from UCI machine learning repository \cite{L2013}. This data set was used in \cite{MPL2016}. It comprises of features extracted from 7 videos with people gesticulating. Each video is represented by a raw file
that contains the positions of hands, wrists, head, and spine of the user in each frame. We took the raw file from video A1 of 1747 frames. Since there are six sensors each with x, y, z coordinates, we have in total 1747 points in $\mathbb{R}^{18}$. There are five gesture phases in the videos: rest, preparation, stroke, hold, and retraction. Indeed, there are five long bars for $H_0$ in \ref{fig:GP_pers} (although they overlap and do not stand out in the picture), which seems to match the five clusters of different phases. We see some 
long bars for $H_1$, which could be created due to periodic patterns in these gesture movements. 

Another data set is the Survivin protein data from \cite{HPRPBLW2014}. There are totally 252996 protein conformations and each conformation is considered as a point in $\mathbb{R}^{150}$. We used PCA to reduce the data dimension to 3. We ran SimBa on both data sets and show the barcodes in Figure \ref{fig:SV_150_pers} and \ref{fig:SV_3_pers}. We can see that there are some long bars for $H_1$.

\begin{figure}
	\centering
	\begin{subfigure}{0.32\textwidth}
		\centering\includegraphics[height=0.8\textwidth]{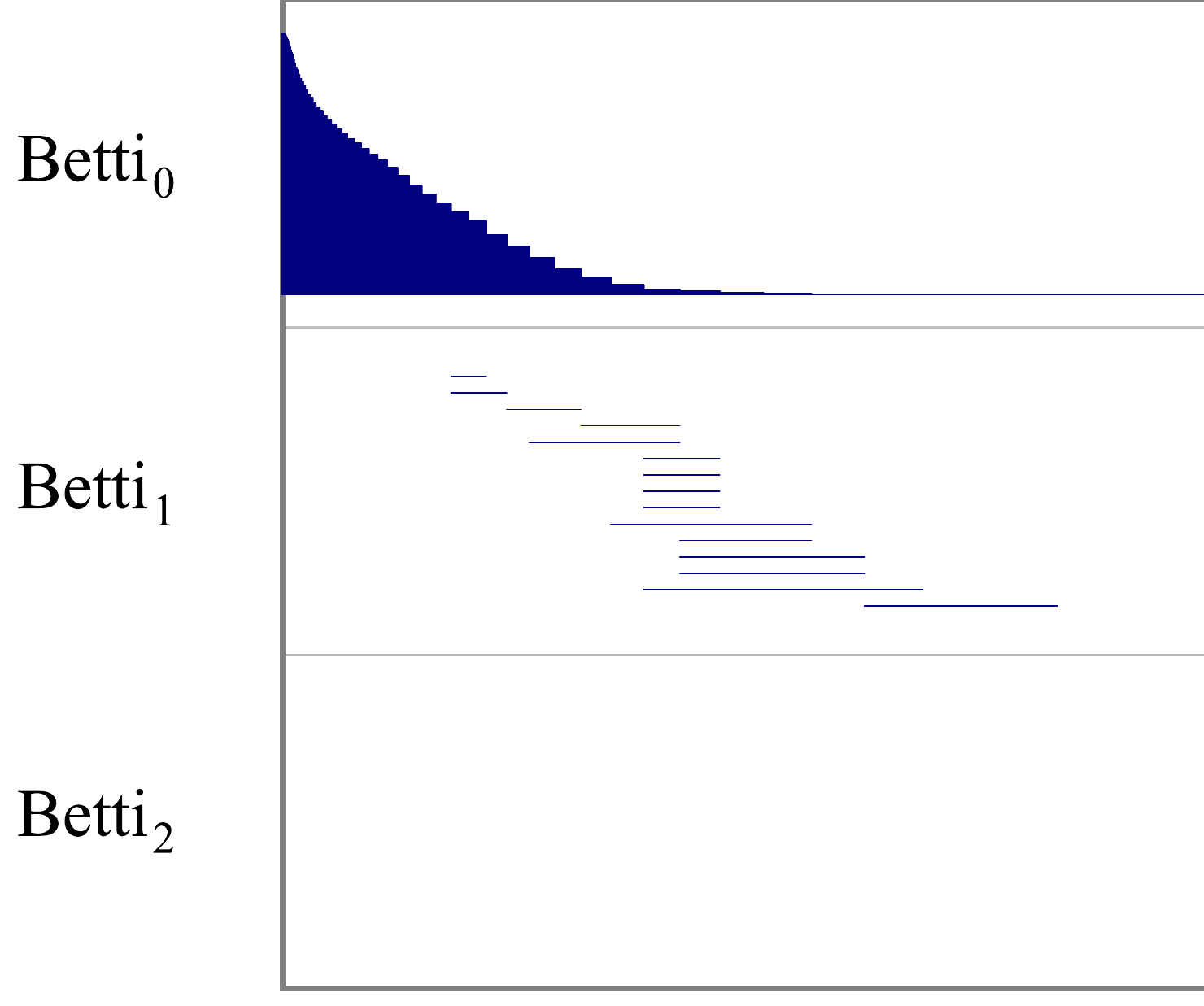}
		\caption{Gesture Phase data in $\mathbb{R}^{18}$}\label{fig:GP_pers}
	\end{subfigure}
	\begin{subfigure}{0.32\textwidth}
		\centering\includegraphics[height=0.8\textwidth]{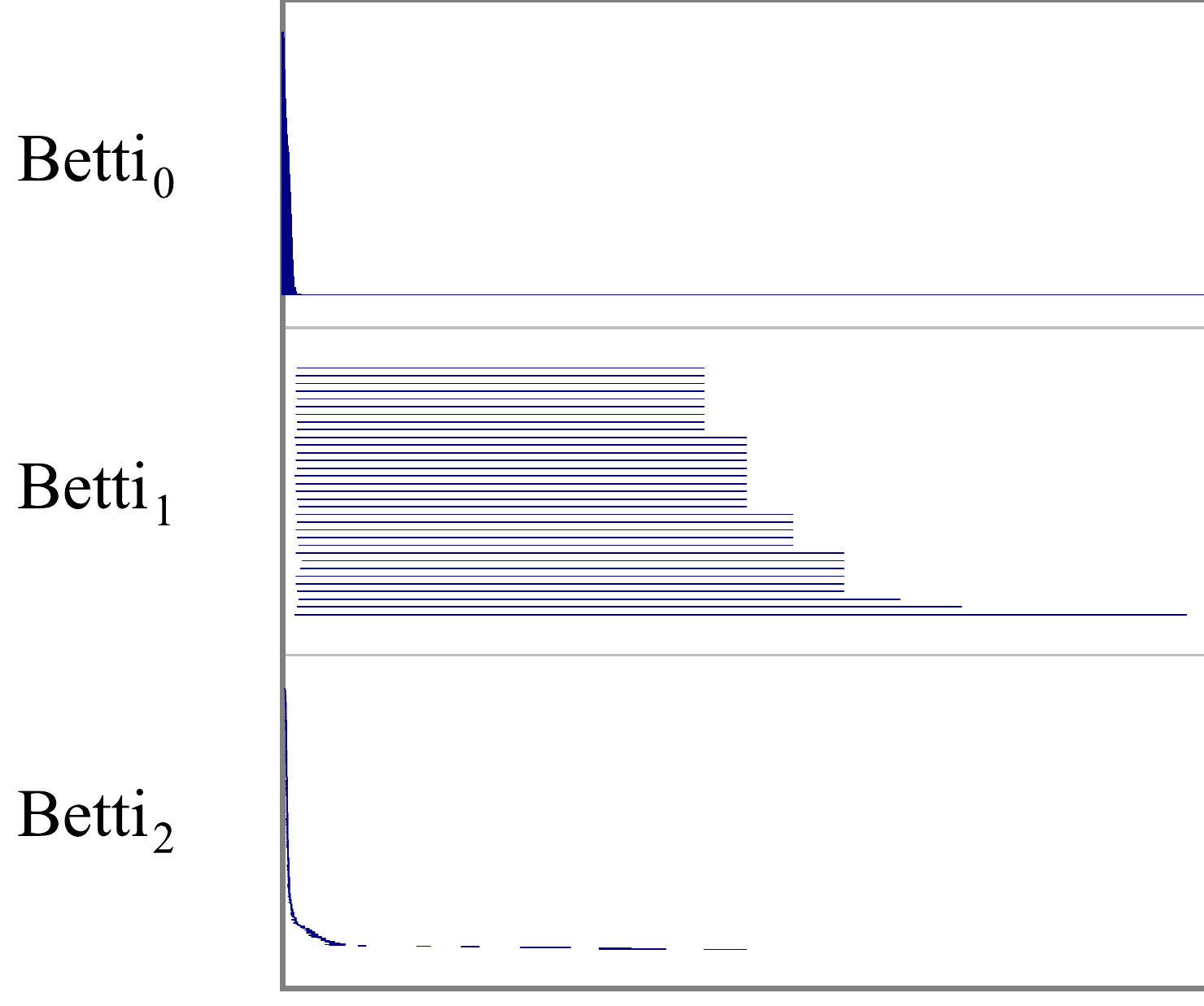}
		\caption{Survivin data in $\mathbb{R}^{3}$}\label{fig:SV_3_pers}
	\end{subfigure}
	\begin{subfigure}{0.32\textwidth}
		\centering\includegraphics[height=0.8\textwidth]{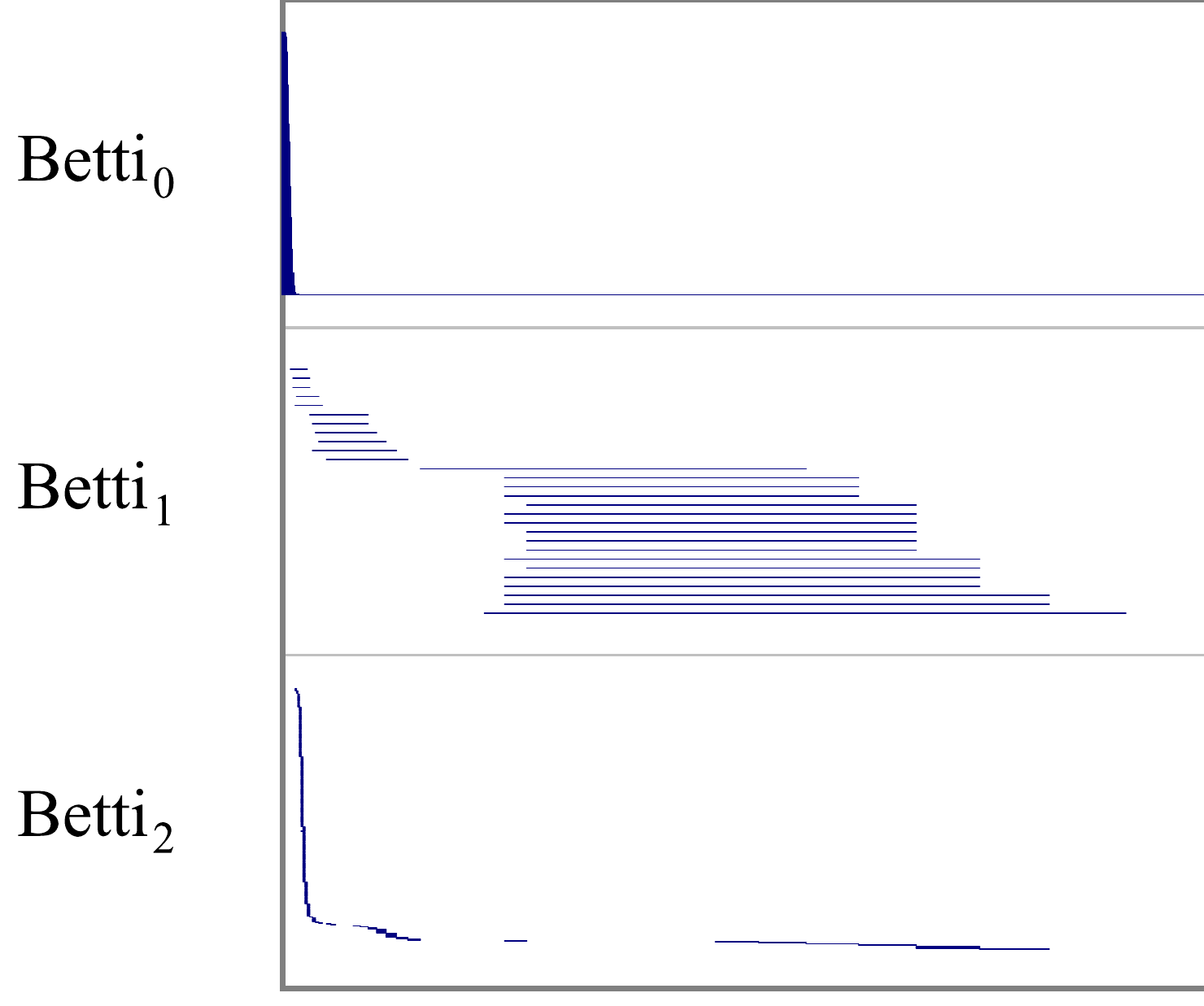}
		\caption{Survivin data in $\mathbb{R}^{150}$}\label{fig:SV_150_pers}
	\end{subfigure}
	\caption{Denoised persistence barcodes computed by SimBa on data sets without ground truth}
	\label{fig:data_wo_GT_pers}
\end{figure}
\begin{figure}
	\centering
	\begin{subfigure}{0.24\textwidth}
		\centering\includegraphics[height=0.8\textwidth]{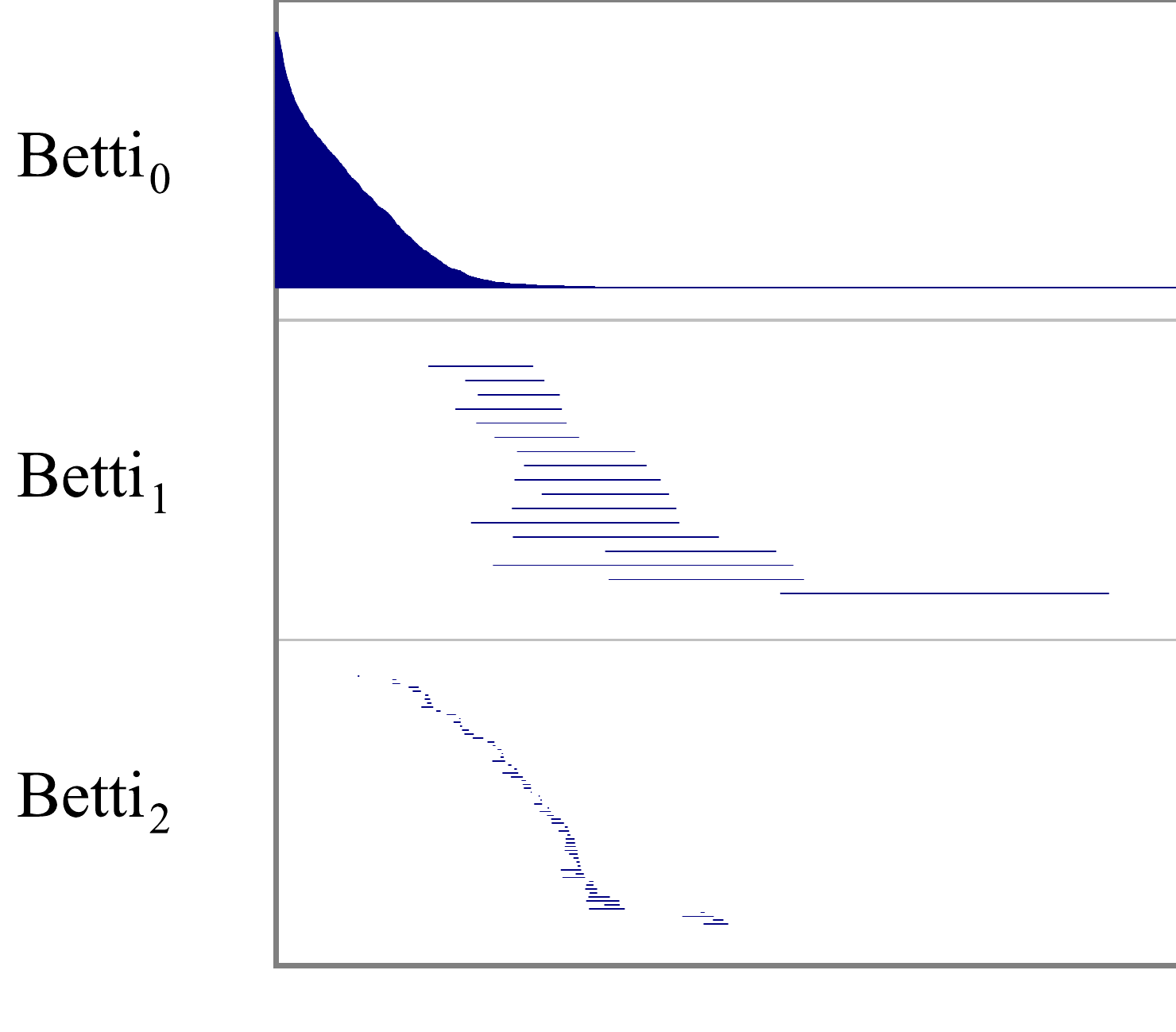}
		\caption{S.R.+GUDHI}\label{fig:GP_a_pers}
	\end{subfigure}
	\begin{subfigure}{0.24\textwidth}
		\centering\includegraphics[height=0.8\textwidth]{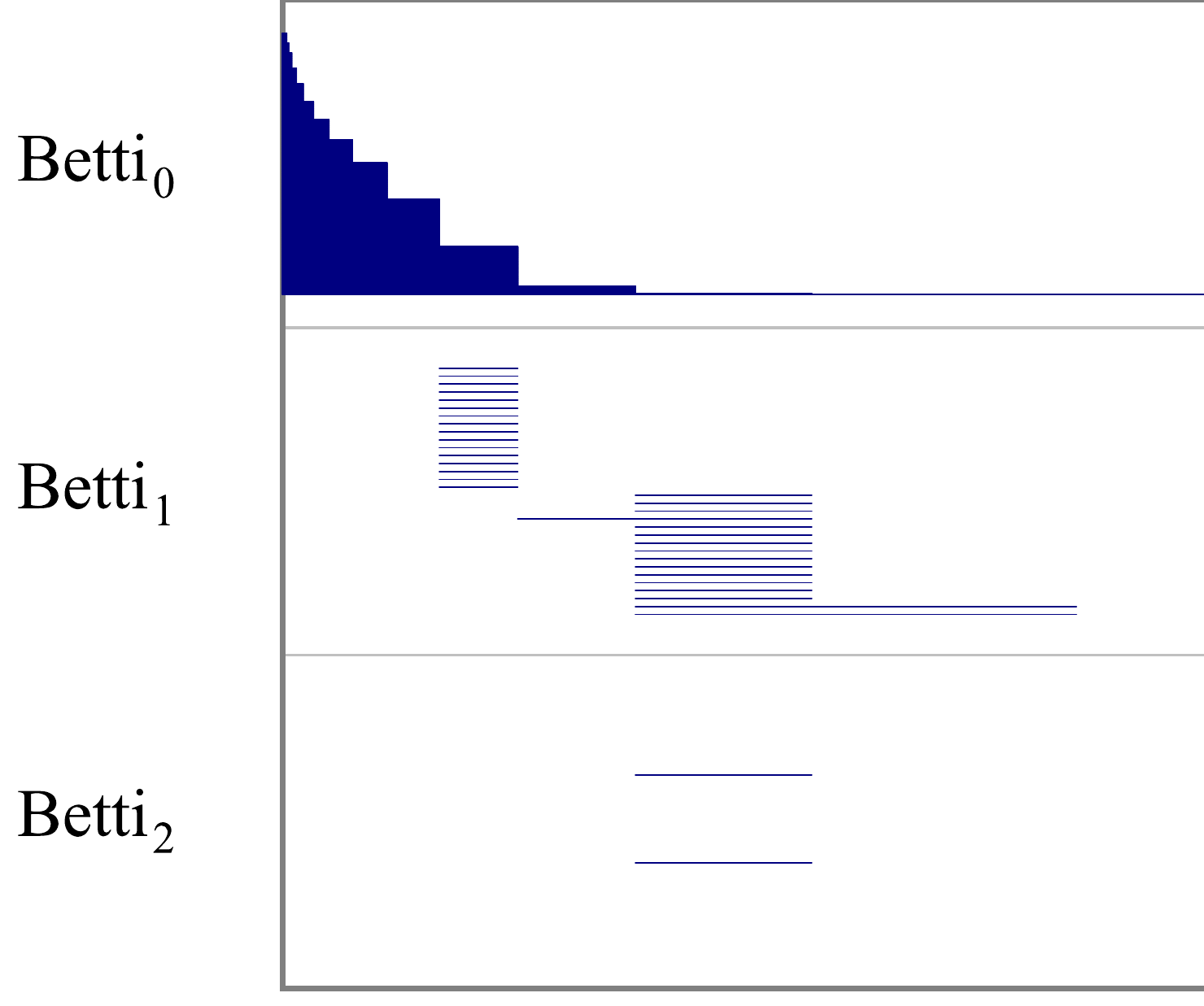}
		\caption{B.R.+Simpers}\label{fig:GP_c_pers}
	\end{subfigure}
	\begin{subfigure}{0.24\textwidth}
		\centering\includegraphics[height=0.8\textwidth]{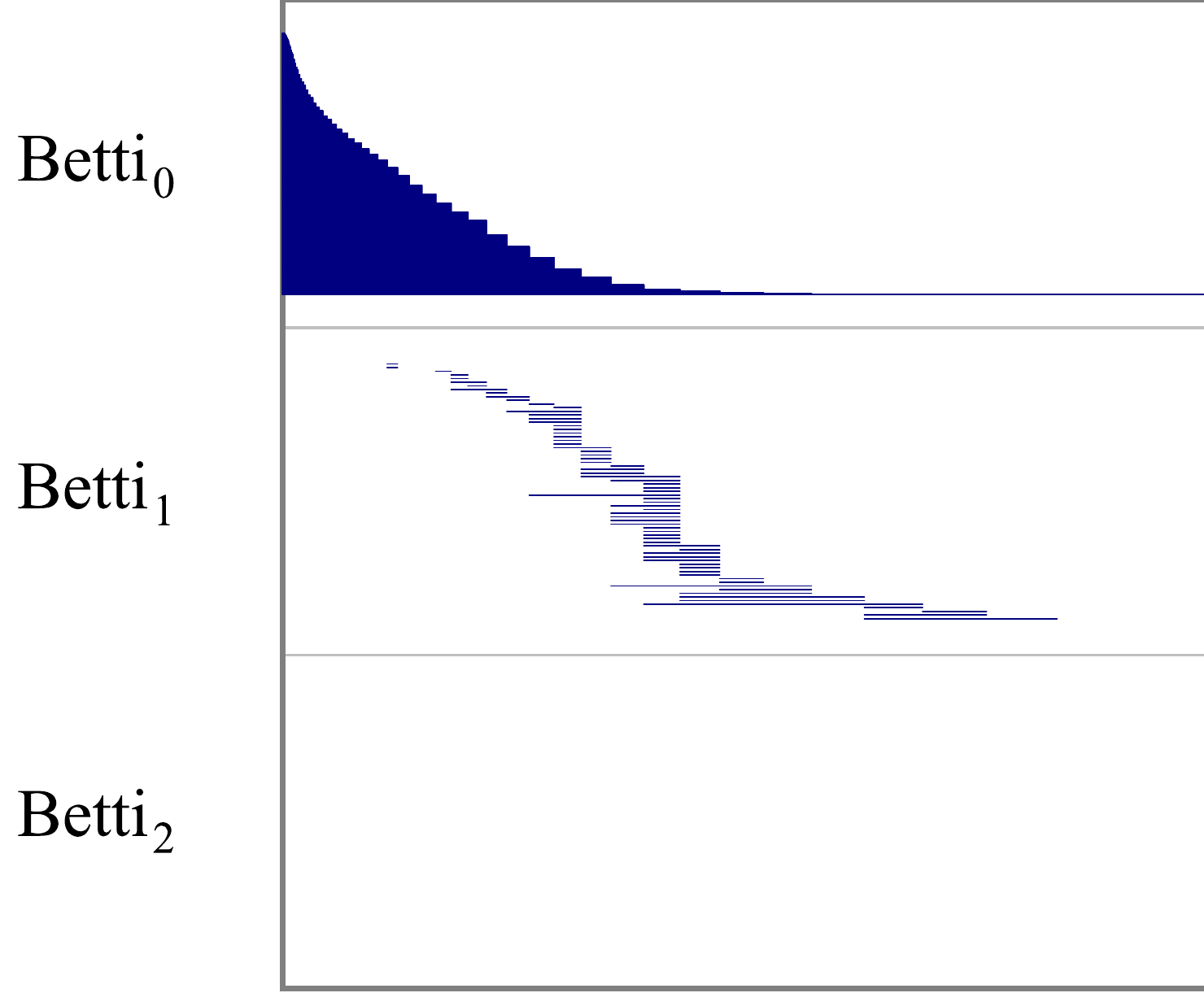}
		\caption{SimBa}\label{fig:GP_d_pers}
	\end{subfigure}
	\begin{subfigure}{0.24\textwidth}
		\centering\includegraphics[height=0.8\textwidth]{GP_d_pers_den.pdf}
		\caption{SimBa (denoised)}\label{fig:GP_d_pers_den}
	\end{subfigure}
	\caption{Persistence barcodes computed by different approaches on Gesture Phase Segmentation data in $\mathbb{R}^{18}$}
	\label{fig:GP_compare_pers}
\end{figure}
\paragraph*{Performance results}
We provide the performance results for all data sets
mentioned in Table~\ref{fig:table}, which includes cumulative size and time cost of each approach. The time is obtained
by adding the time to construct the complexes and the time to compute persistence. {\bf  S.R.+GUDHI}, {\bf S.R.+Simpers}, {\bf B.R.+Simpers} and {\bf SimBa} stand for Sparse Rips plus GUDHI, Sparse Rips plus Simpers, Batch-collapsed Rips plus Simpers, and SimBa respectively. {\bf Mother}, {\bf KlBt}, {\bf PrCi25}, {\bf PrCi49}, {\bf GePh}, {\bf Sur3} and {\bf Sur150} stand for MotherChild model, Klein Bottle, Primary Circle in $\mathbb{R}^{25}$, Primary Circle in $\mathbb{R}^{25}$, Gesture Phase Segmentation data, Survivin protein data in $\mathbb{R}^{3}$ and in $\mathbb{R}^{150}$ respectively. Each data set has size $n$, ambient dimension $D$, and intrinsic dimension $d$. The
symbol $\infty$ means that the program either ran out of memory or 
did not finish after a day. From the table, we can see that SimBa out-performed the other three approaches significantly. Notice that for those larger cases of SimBa, the nearest neighbor search operations (ANN) usually take most of time and become the bottleneck. This is why {\bf Sur150} costs much more time than {\bf PrCi49} while its cumulative size is smaller. It would be an interesting
future work to make nearest neighbor search more efficient so that SimBa performs better even for such cases.
\begin{table}[h]
	\centering
	\setlength{\tabcolsep}{8pt}
	\begin{tabular}{cccc}
		\hspace{3.8cm} {\bf S.R.+GUDHI} & {\bf S.R.+Simpers} & {\bf B.R.+Simpers} & \hspace{0.3cm} {\bf SimBa} \\
	\end{tabular}
	\setlength{\tabcolsep}{3.5pt}
	\begin{tabular}{| l c c c | c c | c c | c c | c c |}
		\hline
		
		Data & $n$ & $D$ & $d$ & size & time(s) & size & time(s) & size & time(s) & size & time(s) \\
		\hline
		
		{\bf Mother} & $23075$ & $3$ & $2$ & $43.5\cdot 10^6$ & $350$
		& $43.5\cdot 10^6$ & $463.7$ & $2.3\cdot 10^6$ & $42.3$ & $104701$ & $8.8$ \\
		
		{\bf KlBt} & $22500$ & $4$ & $2$ & $20.9\cdot 10^6$ & $205.3$
		& $20.9\cdot 10^6$ & $303.5$ & $440049$ & $8$ & $78064$ & $6.6$ \\
		
		{\bf PrCi25} & $15000$ & $25$ & ? & $\infty$ & $-$ & $\infty$ & $-$ & $-$ & $\infty$ & $4.8\cdot 10^6$ & $216$\\
		
		{\bf PrCi49} & $15000$ & $49$ & ? & $\infty$ & $-$ & $\infty$ & $-$ & $-$ & $\infty$ & $10.2\cdot 10^6$ & $585$\\
		
		{\bf GePh} & $1747$ & $18$ & ? & $45.6\cdot 10^6$ & $282.5$ & $45.6\cdot 10^6$ & $432.8$ & $1.4\cdot 10^6$ & $29$ & $7145$ & $0.83$\\
		
		{\bf Sur3} & $252996$ & $3$ & ? & $\infty$ & $-$ & $\infty$ & $-$ & $15.7\cdot 10^6$ & $1056.4$ & $915110$ & $1079.6$\\
		
		{\bf Sur150} & $252996$ & $150$ & ? & $\infty$ & $-$ & $\infty$ & $-$ & $-$ & $\infty$ & $3.1\cdot 10^6$ & $5089.7$\\
		
		\hline
	\end{tabular}
	\caption{cumulative size and time cost}
	\label{fig:table}
\end{table}

\paragraph*{Acknowledgment.} This work has been supported by
NSF grants CCF-1318595 and CCF-1526513.
\newpage
\bibliographystyle{plain}
\bibliography{ref}

\end{document}